\documentclass[a4paper,UKenglish,cleveref, autoref, thm-restate]{lipics-v2021}

\usepackage[utf8]{inputenc}
\usepackage{amsfonts,amssymb,amsmath,amsthm,amscd,latexsym}
\usepackage{xspace} 
\usepackage{mathtools}

\usepackage[pdflatex=true]{gastex}   
\ifpdf
\else
  \AtBeginDocument{%
        \RemoveFromHook{shipout/firstpage}[hyperref]%
        \RemoveFromHook{shipout/before}[hyperref]%
  }
\fi
\usepackage{multirow}
\usepackage{comment}
\usepackage{array}
\usepackage{graphicx}
\usepackage{xcolor}

\definecolor{gray50}{gray}{0.5}
\def\gray50#1{\textcolor{gray50}{#1}}
\definecolor{graydarker}{gray}{0.7}
\def\graydarker#1{\textcolor{graydarker}{#1}}

\definecolor{myred}{rgb}{.7, .12, .12}
\def\myred{\textcolor{myred}}
\definecolor{myblue}{rgb}{.1, .1, .8}
\def\myblue{\textcolor{myblue}}
\definecolor{mygreen}{rgb}{.4, .8, .1}
\def\mygreen{\textcolor{mygreen}}



\makeatletter
\newcommand\footnoteref[1]{\protected@xdef\@thefnmark{\ref{#1}}\@footnotemark}
\makeatother

\def\abs#1{\ensuremath{\lvert #1\rvert}}

\makeatletter
\DeclareRobustCommand\sfrac[1]{\@ifnextchar/{\@sfrac{#1}}%
                                            {\@sfrac{#1}/}}
\def\@sfrac#1/#2{\leavevmode\scalebox{.9}{\kern.1em\raise.5ex
         \hbox{$\m@th\mbox{\fontsize\sf@size\z@
                           \selectfont#1}$}\kern-.1em
         /\kern-.15em\lower.25ex
          \hbox{$\m@th\mbox{\fontsize\sf@size\z@
                            \selectfont#2}$}}}
\DeclareRobustCommand\numfrac[1]{\@ifnextchar/{\@numfrac{#1}}%
                                            {\@numfrac{#1}}}
\def\@numfrac#1{\leavevmode \hbox{$\m@th\mbox{\fontsize\sf@size\z@
                           \selectfont#1}$}}
\makeatother

\newcommand{\nat}{\mathbb N}

\newcommand{\tuple}[1]{\langle #1 \rangle}

\newcommand*{\init}{\varepsilon}

\newcommand{\dist}{{\cal D}}      
\newcommand{\opnu}{\nu}      

\newcommand{\G}{{\cal G}}        
\renewcommand{\H}{{\cal H}}      
\newcommand{\M}{{\cal M}}

\newcommand{\Supp}{{\sf Supp}}
\newcommand{\act}{{\sf act}}

\newcommand{\Reach}{{\sf Reach}}
\newcommand{\Safe}{{\sf Safe}}
\newcommand{\Buchi}{\text{\sf B\"uchi}}
\newcommand{\coBuchi}{\text{\sf coB\"uchi}}

\newcommand{\straa}{\alpha} \newcommand{\Straa}{{\cal A}}
\newcommand{\strab}{\beta}  \newcommand{\Strab}{{\cal B}}

\newcommand{\Prb}{\mathrm{Pr}}
\newcommand{\outcome}{\mathrm{outcome}}
\newcommand{\pure}{{\sf pure}}

\newcommand{\Pre}{{\sf Pre}}

\newcommand{\Cyl}{{\sf Cyl}}

\newcommand{\last}{{\rm last}}

\newcommand{\interior}{{\rm int}}
\newcommand{\U}{\mathcal{U}}

\let\epsilon\varepsilon
\let\emptyset\varnothing

\def\small{\normalfont}

%


\title{Regular Games with Imperfect Information Are Not That Regular}

\author{Laurent Doyen}{CNRS \& LMF, ENS Paris-Saclay, France}{}{https://orcid.org/0000-0003-3714-6145}{}

\author{Thomas Soullard}{LMF, ENS Paris-Saclay \& CNRS, France}{}{https://orcid.org/0009-0003-2675-6447}{}

\authorrunning{L. Doyen and T. Soullard} 

\Copyright{Laurent Doyen and Thomas Soullard} 



\ccsdesc[500]{Theory of computation~Logic and verification}
\ccsdesc[500]{Theory of computation~Probabilistic computation}

\keywords{Imperfect-information games, randomized strategies, synthesis} 






\nolinenumbers 


\begin{document}
\maketitle


\begin{abstract}
We consider two-player games with imperfect information and the 
synthesis of a randomized strategy for one player that ensures the objective
is satisfied almost-surely (i.e., with probability~$1$), regardless of the strategy of the other player.
Imperfect information is modeled by an indistinguishability relation 
describing the pairs of histories that the first player cannot distinguish,
a generalization of the traditional model with partial observations.
The game is regular if it admits a regular function 
whose kernel commutes with the indistinguishability relation.

The synthesis of pure strategies that ensure all possible outcomes satisfy 
the objective is possible in regular games, by a generic reduction that holds
for all objectives. 
While the solution for pure strategies extends to randomized
strategies in the traditional model with partial observations (which is always regular),
a similar reduction does not exist in the more general model.
Despite that, we show that in regular games with B\"uchi objectives
the synthesis problem is decidable for randomized strategies that ensure 
the outcome satisfies the objective almost-surely.
\end{abstract}

\section{Introduction}
We consider the synthesis problem for two-player turn-based games with imperfect information,
a model that has applications in several areas of computer-science, including 
the design of multi-agent systems~\cite{GGL22}, 
logics with uncertainty in planning and AI~\cite{BBvDM17,DF23},
program synthesis \cite{CCHRS11}, and automata theory~\cite{ABK22}.
The synthesis problem asks to decide the existence of (and if so, to construct)
a strategy for one player that ensures a given objective is satisfied with the largest possible probability,
regardless of the choices of the other player.
We focus mostly on reachability objectives, and will also discuss $\omega$-regular objectives~\cite{automata}.

Synthesis provides a natural formulation for the design of a reactive system
that interacts with an unknown environment. The interactive nature of reactive
systems is modeled by a two-player game between the system (the player) and an adversarial environment,
and the limited access of the system to the current state of the game is modeled
by imperfect information.

A simple example of a turn-based game with imperfect information is repeated matching pennies, where
the environment (secretly) chooses head or tail (say, denoted by $x \in \{0,1\}$), 
then the player chooses $y \in \{0,1\}$ without seeing $x$, and wins if $y=x$.
If $y \neq x$, the game repeats for one more round; the player loses if the game repeats forever. 
It is clear that there exists no pure (deterministic) strategy that is winning in repeated
matching pennies, as for all sequences $\bar{x} = x_1 \ldots x_i$
and $\bar{y} = y_1 \ldots y_i$ that represent a history of the game (namely, the sequence of moves played in 
the first $i$ rounds), given $x_{i+1}$ (chosen by the environment), the choice of $y_{i+1}$ (by the player)
must be made independently of $x_{i+1}$, thus we may have $x_{i+1} = 1-y_{i+1}$, showing
that the game may repeat forever. That the player cannot ensure to win is counter-intuitive,
and this is because pure (deterministic) strategies are considered. With randomized strategies, 
the player can win with probability~$1$, simply by choosing $y_i \in \{0,1\}$
with uniform probability at every round. The power of randomization for decision making
with imperfect information is well known and occurs in many situations~\cite[Section~1.3]{Gibbons92}, 
including gene mutation in biology, penalty kick in sports, bluffing in card games, etc.
It is also known that, even for simple objectives like reachability, optimal strategies
may need memory~\cite{CDHR07}. 

In the sequel, we distinguish the synthesis of a pure strategy (which we call
a sure winning strategy, as it aims to ensure all possible outcomes satisfy the objective),
and the synthesis of a randomized strategy (which we call an almost-sure 
winning strategy, as it aims to ensure that the objective is satisfied with probability~$1$).
Note that the synthesis problem for strategies that 
ensure the objective is satisfied with probability at least $\lambda$ (given $0 < \lambda < 1$)
is undecidable for probabilistic automata~\cite{Paz-Book}, a model that can be 
reduced to two-player turn-based games with imperfect information~\cite{CDGH15}.

Given a finite alphabet $\Gamma$ of moves, we denote by $\Gamma^*$ the set of
all histories in a game.
The traditional model of imperfect information in games is a (regular) observation function
defined on $\Gamma^*$ that assigns to each history of the game a color that is visible to the player,
while the history itself is not visible~\cite{Rei84}. 
We call them partial-observation games, or games {\it \`a la Reif}.
This model of games with imperfect information (even turn-based and non-stochastic)
with randomized strategies generalizes many models such as concurrent games (e.g., 
matching pennies), Markov chains, Markov decision processes, 
and stochastic games~\cite[Theorem~5]{CDGH15}. 
The synthesis of pure strategies in partial-observation games {\it \`a la Reif} 
with $\omega$-regular objectives can be done using automata-based techniques 
analogous to the subset construction for finite automata~\cite{Rei84,CDHR07}.
The synthesis of randomized (almost-sure winning) strategies can be done
for reachability (and B\"uchi) objectives with a more involved technique
based on the same approach of subset construction. 
The synthesis problem is EXPTIME-complete in both cases.
Note that for $\omega$-regular (and even for coB\"uchi) objectives,
the synthesis problem with randomized strategies is undecidable~\cite{BGB12}.  

A more abstract (and more general) model of imperfect information is given by a function 
$f: \Gamma^* \times \Gamma^* \to \{0,1\}$ that specifies an indistinguishability
relation~\cite{BD20,BDS23}: two histories $\tau,\tau'$ are indistinguishable if $f(\tau,\tau')=1$.
In analogy to the partial-observation model, indistinguishable histories
have the same length and their prefixes (of the same length) are also indistinguishable.
Hence this function can be viewed as a set of pairs of finite words of the same 
length (over alphabet $\Gamma$), or alternatively as a language of finite words 
over the alphabet $\Gamma \times \Gamma$. Applications of this model include
the synthesis of full-information protocols, which
cannot be expressed by a partial-observation game {\it \`a la Reif}~\cite[Section~6]{BDS23}.

A generic approach to the synthesis problem with this more general model
is to construct a so-called rectangular morphism $h$ defined on $\Gamma^*$ 
with finite range~\cite{BDS23}. Note that a morphism with finite range is a regular function
(which can be defined by a finite-state automaton with output), and we say
that a game is \emph{regular} if such a rectangular morphism $h$ exists. 
For instance, all partial-observation games {\it \`a la Reif} are regular 
as they admit a natural rectangular morphism, which is essentially the subset construction.
The existence of a rectangular morphism guarantees that the information
tree has a finite bisimulation quotient, and thus (pure) strategies 
can be transferred back and forth (by copying the action played in a bisimilar 
position of the other game), preserving the outcome of the game for arbitrary objectives~\cite{BDS23}.
Hence, for the synthesis of pure strategies, regular games with imperfect information 
can be reduced to finite games with perfect information.

In this paper, we consider the synthesis problem with randomized strategies, 
which is central in games with imperfect information (such as matching pennies).
Given the existence of a finite bisimulation quotient in regular games,
the fact that a solution based on a rectangular morphism (the subset construction)
works for the synthesis of randomized strategies in reachability games {\it \`a la Reif}
gives hope that synthesis in regular games is solvable
along a similar path, by a reduction to a simpler equivalent game from which randomized 
strategies can be transferred back and forth, copying both the action
and the probability of playing that action. The hope is reinforced
by the existence of a bijection between the set of randomized strategies in a
game {\it \`a la Reif} and the set of strategies in the simpler equivalent game,
which relates strategies with same (probabilistic measure on the) outcome
for arbitrary objectives~\cite[Theorem~4.2]{CDHR07}.

Surprisingly, we show that even a much weaker variant of such a reduction 
does not exist for regular games. First the hope for a bijective transfer
of strategies is not reasonable because the information tree of
regular games have in general unbounded branching~\cite{BD20}, whereas the simpler 
equivalent games induced by (finite) rectangular morphism have a
bounded-branching information tree. On the other hand, we show that even 
a non-bijective transfer of strategies, as general as it can reasonably be, may not exist
for some regular game with a specific rectangular morphism (Section~\ref{sec:cex}). 
Despite their nice structural properties, the existence of a finite bisimulation
in their information tree, and their apparent similarity with games {\it \`a la Reif}
where a bijection between randomized strategies is induced by a rectangular morphism,
regular games are not as well-behaved as games {\it \`a la Reif}. 

In this context, we 
present an algorithmic solution of the 
synthesis problem with randomized strategies for regular games 
with a reachability objective. The solution exploits the
properties of rectangular morphisms to define a fixpoint computation
with complexity quadratic in the size of the range of the rectangular morphism
(which is of exponential size in the case of games {\it \`a la Reif}~\cite{CDHR07},
and of non-elementary size in the case of full-information protocols~\cite{BDS23}).
Our algorithm shares the common features of the solutions of almost-sure reachability
objectives in Markov decision processes~\cite{CY95}, concurrent games~\cite{AHK07}, 
and games {\it \`a la Reif}~\cite{CDHR07}, namely a nested fixpoint computation
that iteratively constructs the almost-sure winning set 
by computing the set $S_{i+1}$ of states from which the player can win with 
positive probability while ensuring to never leave the set $S_i$, where $S_0$
is the entire state space (Section~\ref{sec:as-reachability}). 
Our solution immediately extends to B\"uchi objectives, using reductions
to reachability objectives from the literature. 

In conclusion, this paper generalizes the positive (decidability) results about randomized strategies
from games {\it \`a la Reif} to regular games (namely for reachability and B\"uchi
objectives), whereas the reductions that worked in games {\it \`a la Reif}
for arbitrary objectives no longer hold.

Omitted proofs are provided in the appendix.


\section{Definitions}

We recall basic definitions from logic and automata theory that will be useful
in the rest of the paper, and then we discuss our model of games.

\subsection{Basic notions}

\smallskip\noindent{\em Regular functions.} A \emph{Moore machine} is a finite-state automaton with outputs, 
consisting of a finite input alphabet $\Gamma$, a finite output alphabet $\Sigma$,
and a tuple $\M = \tuple{Q, q_\init, \delta, \lambda}$ 
with a finite set $Q$ of states (sometimes called the memory), an initial state $q_\init \in Q$,
a (deterministic) transition function $\delta: Q \times \Gamma \to Q$,
and an output function $\lambda: Q \to \Sigma$.

We extend the transition function to input words in the natural way,
defining $\delta: Q \times \Gamma^* \to Q$ by $\delta(q,\epsilon) = q$
for all states $q \in Q$, and inductively $\delta(q,\tau c) = \delta(\delta(q,\tau), c)$
for all histories $\tau \in \Gamma^*$ and moves $c \in \Gamma$. 

The object of interest is the extension of the output function to
a function $\lambda: \Gamma^* \to \Sigma$ defined by 
$\lambda (\tau) = \lambda(\delta(q_\init,\tau))$ for all histories $\tau \in \Gamma^*$.
A function $\lambda$ defined on $\Gamma^*$ is \emph{regular} if it is (the extension of)
the output function of a Moore machine.
The \emph{cumulative extension} of $\lambda$ is denoted by $\hat{\lambda}: \Gamma^* \to \Sigma^*$ 
and defined by $\hat{\lambda}(\epsilon) = \epsilon$
and inductively $\hat{\lambda}(\tau c) = \hat{\lambda}(\tau) \lambda(\tau c)$,
thus $\hat{\lambda}(c_1\ldots c_k) = \lambda(c_1) \lambda(c_1c_2) \ldots \lambda(c_1\ldots c_k)$.
Note that $\lambda = \last \circ \hat{\lambda}$ where $\last(c_1\ldots c_k) = c_k$.
The function $\hat{\lambda}$ is further extended to infinite words as expected:
$\hat{\lambda}(\pi) = \lambda(c_1) \lambda(c_1c_2) \ldots$ for all $\pi = c_1c_2\ldots \in \Gamma^{\omega}$.

\smallskip\noindent{\em Fixpoint formulas.}
We briefly recall the interpretation of  $\mu$-calculus fixpoint formulas~\cite{BS07,BW18}
based on the Knaster-Tarski theorem.
Given a monotonic function $\psi: 2^Q \to 2^Q$
(i.e., such that $X \subseteq Y$ implies $\psi(X) \subseteq \psi(Y)$),
the expression $\opnu Y. \psi(Y)$ is the (unique) greatest fixpoint of $\psi$,
which can be computed as the limit of the sequence $(Y_i)_{i \in \nat}$ defined by
$Y_0 = Q$, and $Y_{i} = \psi(Y_{i-1})$ for all $i \geq 1$.
Dually, the expression $\mu X.~\psi(X)$ is the (unique) least fixpoint of $\psi$,
and the limit of the sequence $(X_i)_{i \in \nat}$ defined by
$X_0 = \emptyset$, and $X_{i} = \psi(X_{i-1})$ for all $i \geq 1$.
If $\abs{Q} = n$, then it is not difficult to see that the 
limit of those sequences is reached after at most $n$ iterations, 
$X_{n} = X_{n+1}$ and $Y_{n} = Y_{n+1}$.

\smallskip\noindent{\em On equivalence relations.}
Given an equivalence relation $\sim\, \subseteq S \times S$
and a set $T \subseteq S$, the \emph{closure} of $T$ under $\sim$ is the set 
$[T]_{\sim} = \{t \in S \mid \exists t' \in T: t \sim t' \}$,
and the \emph{interior} of $T$ under $\sim$ is the set 
$\interior_{\sim}(T) = \{t \in S \mid [t]_{\sim} \subseteq T\}$.
The closure and interior are monotone operators, that is if
$T \subseteq U$, then $[T]_{\sim} \subseteq [U]_{\sim}$ and 
$\interior_{\sim}(T) \subseteq \interior_{\sim}(U)$.
Note the duality $S \setminus \interior_{\sim}(T) = [S \setminus T]_{\sim}$,
the complement of the interior of $T$ is the closure of the complement of $T$.
We say that $T$ is \emph{closed} under $\sim$ if $T = [T]_{\sim}$ (or equivalently
$T = \interior_{\sim}(T)$).

\subsection{Games with imperfect information}\label{sec:def:games}

We consider an abstract model of games, given by a (nonempty) set $A$
of actions, a set $\Gamma$ of moves, and a surjective function $\act: \Gamma \to A$. 
The game is played over infinitely many rounds (or steps). 
In each round, the player chooses an action $a \in A$, 
then the environment chooses a move $c \in \Gamma$ such that $\act(c) = a$.
We say that the move $c$ is \emph{supported} by the action $a$.

The outcome of the game is an infinite sequence $\pi \in \Gamma^{\omega}$ of 
moves of the environment, called a \emph{play},
from which we can reconstruct the infinite sequence of actions of the player
using the function $\act$. A finite sequence of moves is called a \emph{history}.

The winning condition is a set $W \subseteq \Gamma^{\omega}$ of plays
that are winning for the player. We specify the winning condition
by the combination of a coloring function $\lambda: \Gamma^* \to C$ 
for some finite alphabet $C$ of colors, and a logical objective $L \subseteq C^{\omega}$.
The coloring function is regular, and specified by a Moore machine.
The induced winning condition is the set 
$W = \{\pi \in \Gamma^{\omega} \mid \hat{\lambda}(\pi) \in L\}$
of plays whose coloring satisfies the objective.

We consider the following classical objectives, for $C = \{0,1\}$:
\begin{itemize}
\item the reachability objective is $\Reach = 0^*1\{0,1\}^{\omega}$
and for $x \in \nat$ let $\Reach^{\leq x} = \bigcup_{i \leq x} 0^i 1\{0,1\}^{\omega}$;

\item the safety objective is $\Safe = 0^{\omega} = \{0,1\}^{\omega} \setminus \Reach$;

\item the B\"uchi objective is $\Buchi = (0^*1)^{\omega}$;

\item the coB\"uchi objective is $\coBuchi = \{0,1\}^*0^{\omega} = \{0,1\}^{\omega} \setminus \Buchi$;

\end{itemize}

It is convenient to consider that the coloring function $\lambda$ is part of 
a game instance, while the objective $L$ can be specified independently of 
the game. This allows to quantify separately the game instances
and the objectives. We recall that the traditional games played on graphs~\cite{automata,AG11} 
are an equivalent model, in particular they can be translated into our abstract
model of game, for example by letting moves be the edges of the game graph and the 
corresponding action be the label of the edge, while the graph structure
can be encoded in the winning condition~\cite{Thomas95}. When we refer
to a game played on a graph, we assume that all move sequences that contain 
two consecutive moves (i.e., edges) $c_i = (q_1,q_2)$ and $c_{i+1} = (q_3,q_4)$ such 
that $q_2 \neq q_3$ belong to the winning condition (for the player).

\begin{figure}[!t]
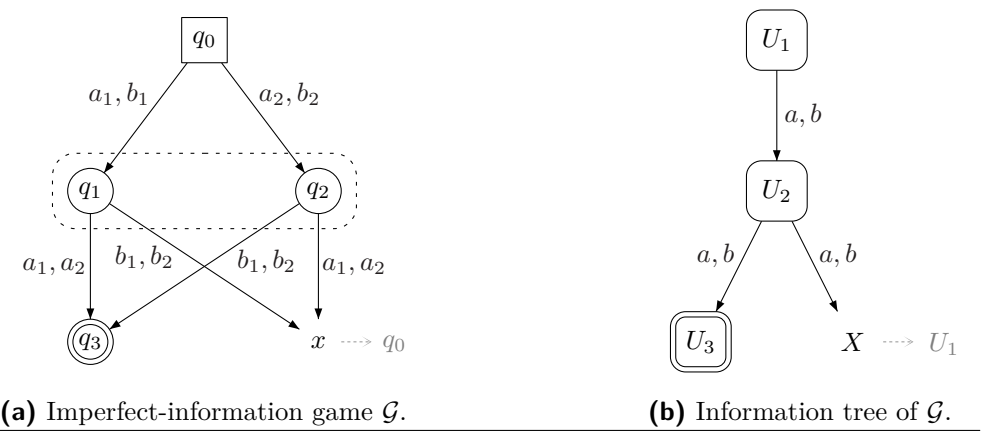
%
\begin{center}
\hrule
\medskip
\hspace{10mm}
\subfloat[Imperfect-information game $\G$.]{
   \input{figures/game-as.tex}
   \label{fig:game-as}  
}
\hfill
\subfloat[Information tree of $\G$.]{
   \input{figures/game-as-information-tree.tex}
    \label{fig:game-as-information-tree}
}
\hfill
\hrule
\caption{A game with imperfect information representing matching pennies, and its information tree.
 \label{fig:game-as-main}}%
\end{center}
\end{figure}

Imperfect information is represented by an indistinguishability relation
$\sim \, \in \Gamma^* \times \, \Gamma^*$ that specifies the pairs of histories 
that the player cannot distinguish. 
An indistinguishability relation is an equivalence that satisfies the 
following conditions~\cite{BD20}, 
for all histories $\tau, \tau' \in \Gamma^*$ and moves $c, c' \in \Gamma$:
$(i)$ if $\tau \sim \tau'$, 
then $\abs{\tau} = \abs{\tau'}$ (indistinguishable histories have the same length); 
$(ii)$ if $\tau c \sim \tau' c'$, 
then $\tau \sim \tau'$ (the relation is prefix-closed) and $\act(c) = \act(c')$ (the action is visible).
Indistinguishability relations may be specified using two-tape automata, 
but the results of this paper hold for arbitrary indistinguishability relations.

Given a history $\tau \in \Gamma^*$
we call the equivalence class $[\tau]_{\sim} = \{\tau' \in \Gamma^* \mid \tau' \sim \tau\}$ 
containing~$\tau$ an \emph{information set}.
A function $\lambda$ defined on $\Gamma^*$ is \emph{information consistent} 
if $\lambda$ is 
constant over every information set: 
$\lambda(\tau) = \lambda(\tau')$ for all indistinguishable histories $\tau \sim \tau'$. 
An equivalent requirement is that the cumulative extension $\hat{\lambda}$
is constant over every information set, since indistinguishability relations are
prefix-closed.
We always require the coloring function of a game to be information consistent
(with respect to the indistinguishability relation of the game).
For every action $a \in A$, we define a successor relation $\to_a$ over information sets,
where $[\tau]_{\sim}  \to_a [\tau c]_{\sim}$ for all moves $c \in \Gamma$ such that $\act(c) = a$.
The relations $\to_a$ are obviously acyclic, and induce a structure called 
the \emph{information tree}. Note that there is
a natural extension of the coloring function~$\lambda$ to information sets, defined by
$\lambda([\tau]_{\sim}) = \lambda(\tau)$ for all histories $\tau \in \Gamma^*$,
since $\lambda$ is information-consistent.

Given the above definitions, we represent a game with imperfect information 
as a tuple $\G = \tuple{A,\act,\sim,\lambda}$, together with an 
objective $L \subseteq C^{\omega}$, where the (finite) alphabet of moves is
$\Gamma$, and the (finite) alphabet of colors is $C$.
The information tree of $\G$ is denoted by $\U_{\sim}(\G)$ (consisting
of the information sets of $\G$, the successor relations $\to_a$, and the coloring $\lambda$
extended to information sets).
The special case of perfect-information games corresponds 
the indistinguishability relation $\sim$ being the identity (or equivalently 
by the information sets $[\tau]_{\sim} = \{\tau\}$ being singletons 
for all histories $\tau \in \Gamma^*$).

In figures, we present a very informal (but readable) description of games.
The main components of a game that figures represent are the
coloring function and the indistinguishability relation. 
Strictly speaking, the figures do not show a game but provide sufficient
information to reconstruct the Moore machine defining the coloring function,
and the two-tape automaton defining the indistinguishability relation.
It should be possible to infer the color of a history (when its color is relevant),
and the information set containing a given history.
States are normally depicted as circles, but we may use boxes to emphasize
that the action choice of the player is not relevant and only the move of the
environment determines the successor state.

In the game $\G$ of \figurename~\ref{fig:game-as}, representing matching pennies,
the player has two actions, $A = \{a,b\}$, and the environment has two possible responses
for each of them, thus four moves, $\Gamma = \{a_1,a_2,b_1,b_2\}$
where $\act(a_i) = a$ and $\act(b_i) = b$ for $i=1,2$.
Intuitively, the adversarial environment chooses head ($1$) or tail ($2$) 
in the first round (at $q_0$) by choosing a response $1$ or $2$ (the action of the player is not relevant) 
that the player cannot observe (all histories of length $1$ are indistinguishable,
as suggested by the dashed line in \figurename~\ref{fig:game-as}).
The player chooses head (action $a$) or tail (action $b$) in the next round, 
wins if the choice is matching the environment's choice,
and replays the same game otherwise ($x$ is a placeholder for the whole tree of 
\figurename~\ref{fig:game-as}).

The information tree of $\G$ is shown in \figurename~\ref{fig:game-as-information-tree}
where $U_1 = [\epsilon]_{\sim}  = \{\epsilon\}$ and $U_2 = [a_1]_{\sim}  = \{a_1,a_2,b_1,b_2\} = \Gamma$.
The information set $U_3$ contains eight of the sixteen histories of length $2$.
The node $X$ is a placeholder for a tree that is isomorphic to the tree rooted at $U_1$.

We recall that the nodes in information trees have a finite number of successors 
(there is at most $\abs{\Gamma}^n$ information sets containing histories of length $n$),
however the branching degree in a given information tree may be unbounded~\cite{BD20}.
The partial-observation games~{\it \`a la Reif}~\cite{Rei84} are characterized
by their information tree having bounded branching~\cite[Theorem~7]{BD20}.


\subsection{Strategies and outcome}\label{sec:def:strat}
A \emph{probability distribution} on a finite set~$S$ is a
function $d : S \to [0, 1]$ such that $\sum_{s \in S} d(s)= 1$. 
A \emph{Dirac distribution} assigns probability~$1$ to a (unique) element $s \in S$.
We denote by $\dist(S)$ the set of all probability distributions on~$S$.

A strategy for the player is a function $\straa: \Gamma^* \to \dist(A)$
that is information consistent, thus $\straa(\tau) = \straa(\tau')$ for all 
indistinguishable histories $\tau \sim \tau'$. 
The strategy $\straa$ is \emph{pure} (or, deterministic) if $\straa(\tau)$
is a Dirac distribution for all $\tau \in \Gamma^*$.
A strategy for the environment is a function 
$\strab: \Gamma^* \times A \to \dist(\Gamma)$
such that for all histories $\tau \in \Gamma^*$,
actions $a \in A$, and moves $c \in \Gamma$, 
if the move $c$ is played with positive probability $\strab(\tau,a)(c) > 0$,
then it is supported by the action $a$, $\act(c) = a$.
The strategy $\strab$ is \emph{pure} (or, deterministic) if $\strab(\tau,a)$
is a Dirac distribution for all $\tau \in \Gamma^*$ and $a \in A$.
We denote by $\Straa_{\G}$ (resp., by $\Strab_{\G}$) the set of all strategies 
for the player (resp., for the environment), and by $\Straa^{\pure}_{\G}$ (resp., by $\Strab^{\pure}_{\G}$)
the set of all pure strategies for the player (resp., for the environment).

A pair of strategies $(\straa,\strab)$ for the player and the environment
induce a probability measure $\Prb^{\straa,\strab}$ on the sigma-algebra
over the set of (infinite) plays. By Carath\'eodory's extension theorem,
it is sufficient to define the probability measure over cylinder sets spanned
by (finite) prefixes of plays. We define a family of 
probability measures $\Prb_{\tau}^{\straa,\strab}$ where $\tau \in \Gamma^*$
corresponds to starting the game after history $\tau$.
We set $\Prb^{\straa,\strab} = \Prb_{\epsilon}^{\straa,\strab}$.
Given $\rho \in \Gamma^*$, the measure of the cylinder set
$\Cyl(\rho) = \rho \Gamma^{\omega} = \{\pi \in \Gamma^{\omega} \mid 
\rho \text{ is a prefix of } \pi \}$ is defined as follows:

\begin{itemize}

\item if $\rho$ is a prefix of $\tau$, then 
$$\Prb_{\tau}^{\straa,\strab}(\Cyl(\rho)) = 1;$$

\item if $\rho = \tau c_1 \dots c_k$ where $c_1, \dots, c_k \in \Gamma$
for some $k \geq 1$,
then 
$$\Prb_{\tau}^{\straa,\strab}(\Cyl(\rho)) = 
\prod_{i=1}^{k} \straa(\tau c_1 \dots c_{i-1})(\act(c_i)) \cdot \strab(\tau c_1 \dots c_{i-1}, \act(c_i))(c_i);$$

\item all other cylinder sets have measure~$0$.

\end{itemize}

Given an objective $L \subseteq C^{\omega}$ defined on colors, 
assuming the game $\G$ (and thus $\hat{\lambda}$) is clear from the context
we take the freedom to denote by $\Prb^{\straa,\strab}(L)$
the probability $\Prb^{\straa,\strab}(\{\pi \in \Gamma^{\omega}
\mid \hat{\lambda}(\pi) \in L\})$ that the objective $L$ is satisfied
by an outcome of the pair of strategies $(\straa,\strab)$ in $\G$.
A strategy $\straa$ of the player is \emph{almost-sure winning} from an information
set $[\tau]_{\sim}$ (or simply \emph{almost-sure winning} if $\tau = \epsilon$) 
for objective $L$ if $\Prb_{\tau'}^{\alpha,\beta}(L) = 1$ 
for all $\tau' \in [\tau]_{\sim}$ and all strategies $\beta \in \Strab_{\G}$ of the environment.
We are interested in solving the synthesis problem, which is to decide,
given a game $\G$ and objective $L$, whether there exists an almost-sure winning 
strategy in $\G$ for $L$.

Note that the definition of almost-sure winning is equivalent to a formulation
where only \emph{pure} strategies of the environment are considered, 
that is where we require $\Prb_{\tau'}^{\alpha,\beta}(L) = 1$ for all $\beta \in \Strab^{\pure}_{\G}$ only, 
as once a strategy for the player is fixed, the environment plays in a (possibly infinite-state) 
Markov decision process, for which pure strategies are sufficient~\cite{Mar98,CDGH15}.


We say that a history $\tau \in \Gamma^*$ is \emph{compatible} with $\straa$ and $\strab$
if $\Prb^{\straa,\strab}(\Cyl(\tau)) > 0$. An \emph{outcome} of the pair of strategies $(\straa,\strab)$
is a play $\pi \in \Gamma^{\omega}$ all of whose prefixes are compatible with $\straa$ and $\strab$.
We note that if $\Prb^{\alpha,\beta}(\Cyl(\tau)) > 0$,
then $\Prb_{\tau}^{\alpha,\beta}(\Cyl(\tau c_1 c_2 \dots c_n)) =
\Prb^{\alpha,\beta}(\Cyl(\tau c_1 c_2 \dots c_n) \mid \Cyl(\tau))$.

\section{Almost-Sure Reachability}\label{sec:as-reachability}

Throughout this section, we consider games with a (fixed) reachability objective $\Reach = 0^*1\{0,1\}^{\omega}$
over color alphabet $C = \{0,1\}$,
thus a play is winning if it has a finite prefix with color $1$ according to the coloring function $\lambda$
of the game. 
By extension, we say that a history $\tau$ is winning if $\lambda(\tau) = 1$.
We also assume without loss of generality 
that if $\lambda(\tau) = 1$, then $\lambda(\tau c) = 1$ for all
$\tau \in \Gamma^*$ and all $c \in \Gamma$. 
In the rest of this section, we fix a game $\G$ as defined in Section~\ref{sec:def:games}
and we show how to decide the existence
of an almost-sure winning strategy in games with a reachability 
objective, relying on the existence of a rectangular morphism (defined below).
Two interesting classes of games are known to admit rectangular morphisms,
namely the partial-observation games~{\it \`a la Reif} and the full-information
protocols~\cite{BDS23}.

\subsection{Regular games with imperfect information}\label{sec:regular-games}

A \emph{rectangular morphism} for $\G$ is 
a surjective function $h: \Gamma^* \to P$ (for some finite set $P$)
such that for all $\tau, \tau' \in \Gamma^*$ and $c \in \Gamma$:
\begin{description}

\item[Rectangularity\footnotemark]\footnotetext{Rectangularity is equivalent to the
kernel $H = \{(\tau,\tau') \mid h(\tau) = h(\tau')\}$ of $h$ commuting 
with the indistinguishability relation, that is $H \,\circ \sim \;=\; \sim \circ\, H$.}
if $h(\tau) = h(\tau')$, 
then $h([\tau]_{\sim}) = h([\tau']_{\sim})$,

\item[Morphism] if $h(\tau) = h(\tau')$, 
then $h(\tau c) = h(\tau' c)$,

\item[Refinement] if $h(\tau) = h(\tau')$, 
then $\lambda(\tau) = \lambda(\tau')$.
\end{description}

Note that, as $h$ is a finite-valued morphism, it can be represented by an
automaton (whose output is not relevant) with input alphabet $\Gamma$ 
and state space $P$. Games that admit a finite-valued rectangular morphism are called \emph{regular}.
Define the relation $\approx\, = \{(h(\tau), h(\tau')) \mid \tau \sim \tau'\}$
and recall a non-trivial fundamental result.

\begin{lemma}\cite[Lemma~3 \& Lemma~4]{BDS23}\label{lem:approx-equiv}
The relation $\approx$ is an equivalence and $h([\tau]_{\sim}) = [h(\tau)]_{\approx}$
for all $\tau \in \Gamma^*$.
\end{lemma}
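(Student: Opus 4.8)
The plan is to establish the two assertions in turn: first that $\approx$ is an equivalence relation on $P$, and then that $h([\tau]_{\sim}) = [h(\tau)]_{\approx}$ for every history $\tau$. Throughout I will use only the three defining properties of a rectangular morphism (Rectangularity, Morphism, Refinement) together with the fact that $\sim$ is an equivalence relation on $\Gamma^*$ and that $h$ is surjective onto $P$.

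For reflexivity of $\approx$, note that $h$ is surjective, so every $p \in P$ equals $h(\tau)$ for some $\tau$; since $\tau \sim \tau$ (reflexivity of $\sim$), we get $(p,p) = (h(\tau),h(\tau)) \in\, \approx$. Symmetry is immediate from the symmetry of $\sim$: if $(h(\tau),h(\tau')) \in\, \approx$ witnessed by $\tau \sim \tau'$, then $\tau' \sim \tau$ witnesses $(h(\tau'),h(\tau)) \in\, \approx$. Transitivity is the one place where Rectangularity does real work. Suppose $p \approx q$ and $q \approx r$; choose witnesses $\tau_1 \sim \tau_1'$ with $h(\tau_1) = p$, $h(\tau_1') = q$, and $\tau_2 \sim \tau_2'$ with $h(\tau_2) = q$, $h(\tau_2') = r$. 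The obstacle is that $\tau_1'$ and $\tau_2$ need not be equal or indistinguishable — only $h(\tau_1') = h(\tau_2) = q$. Here I invoke Rectangularity: from $h(\tau_1') = h(\tau_2)$ we get $h([\tau_1']_{\sim}) = h([\tau_2]_{\sim})$. Since $\tau_1 \in [\tau_1']_{\sim}$, there is some $\tau_2'' \in [\tau_2]_{\sim}$ with $h(\tau_2'') = h(\tau_1) = p$; and since $\tau_2'' \sim \tau_2 \sim \tau_2'$, transitivity of $\sim$ gives $\tau_2'' \sim \tau_2'$, hence $(h(\tau_2''),h(\tau_2')) = (p,r) \in\, \approx$.

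For the identity $h([\tau]_{\sim}) = [h(\tau)]_{\approx}$, the inclusion $\subseteq$ is essentially the definition of $\approx$: if $p \in h([\tau]_{\sim})$ then $p = h(\tau')$ for some $\tau' \sim \tau$, so $p \approx h(\tau)$, i.e. $p \in [h(\tau)]_{\approx}$. For the reverse inclusion $\supseteq$, take $p \in [h(\tau)]_{\approx}$, so $p \approx h(\tau)$ with a witness $\tau_1 \sim \tau_1'$ satisfying $h(\tau_1) = p$ and $h(\tau_1') = h(\tau)$. Again $\tau_1'$ and $\tau$ need only agree under $h$, so I apply Rectangularity to $h(\tau_1') = h(\tau)$: this yields $h([\tau_1']_{\sim}) = h([\tau]_{\sim})$. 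Since $p = h(\tau_1) \in h([\tau_1']_{\sim})$, we conclude $p \in h([\tau]_{\sim})$, as required. I expect the transitivity argument (and the symmetric use of Rectangularity in the $\supseteq$ direction) to be the crux: the key idea in both is that Rectangularity lets one ``transport'' a representative of one information set, chosen via its $h$-value, into an $\sim$-equivalent representative inside another information set with the same $h$-image. Everything else is bookkeeping with the equivalence-relation axioms for $\sim$.
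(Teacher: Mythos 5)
Your proof is correct. The paper does not prove this lemma itself --- it imports it from [BDS23, Lemmas 3 and 4] as a ``non-trivial fundamental result'' --- so there is no in-paper argument to compare against, but your derivation is a complete and valid one from the stated definition of a rectangular morphism. You correctly locate the only non-trivial points: reflexivity and symmetry of $\approx$ follow from surjectivity of $h$ and symmetry of $\sim$, whereas transitivity and the inclusion $[h(\tau)]_{\approx} \subseteq h([\tau]_{\sim})$ both genuinely need Rectangularity, used exactly as you describe --- to replace a witness $\tau_1'$ of one $\approx$-pair by a $\sim$-equivalent representative inside the information set of a witness $\tau_2$ of the other pair, given only that $h(\tau_1') = h(\tau_2)$. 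This is the same mechanism expressed by the paper's footnote that Rectangularity amounts to the kernel of $h$ commuting with $\sim$; your argument is in effect a direct verification of that commutation in the two places where it is needed. Nothing is missing.
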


We call the elements of $P$ \emph{abstract states}. 
We extend the relation $\approx$ to elements of $P \times A$
as follows: for all $p,p' \in P$ and $a,a' \in A$, let 
$(p,a) \approx (p',a')$ if $p \approx p'$ and $a=a'$.

Thanks to the morphism property, there exists a function $\delta^P: P \times \Gamma \to P$ such that 
$\delta^P(h(\tau),c) = h(\tau c)$ for all $\tau \in \Gamma^*$ and $c \in \Gamma$.
Define the set 
$P_F = \{p \in P \mid \exists \tau \in \Gamma^*: h(\tau) = p \land \lambda(\tau) = 1 \}$ 
of target (or final) states, which are the images by $h$ of winning
histories. Note that the coloring function is information-consistent
and therefore the set $P_F$ is closed under $\approx$ by Lemma~\ref{lem:approx-equiv}.
Moreover, by our assumption
on the coloring function the states in $P_F$ form a sink set, that is $\delta^P(p,c) \in P_F$
for all states $p \in P_F$ and moves $c \in \Gamma$.

Rectangular morphisms are central to the solution of the synthesis problem for \emph{pure}
strategies~\cite{BDS23}, by showing that the following abstract game is equivalent
to $\G$: starting from $p_0 = h(\epsilon)$, the game is played in rounds where
each round starts with a value $p$ and the player chooses an action $a \in A$,
the next round starts in $p'$ chosen by the environment such that $p' \approx \delta^P(p,c)$
for some move $c \in \Gamma$ such that $\act(c) = a$. The player wins if a value in $P_F$
eventually occurs along a play. 
We may view the elements of $P$ as positions of the player, and elements
of $P \times A$ as positions of the environment. We further discuss this 
abstract game in Section~\ref{sec:bij-bis}.


An abstract state $p \in P$ is \emph{existentially winning}
if there exists a history $\tau \in \Gamma^*$
such that $h(\tau) = p$ and 
the player has an \emph{almost-sure winning} strategy from $[\tau]_{\sim}$.
We denote by $P_{\exists}$ the set of all existentially winning states.
Note that the set $P_{\exists}$ is closed under $\approx$,
since for all $p \in P_{\exists}$ and $p'\approx p$, by Lemma~\ref{lem:approx-equiv}
there exists $\tau'\in [\tau]_{\sim}$, such that $h(\tau')=p'$
and the player has an \emph{almost-sure winning} strategy from $[\tau']_{\sim} = [\tau]_{\sim}$.

An abstract state $p \in P$ is \emph{universally winning}
if for all histories $\tau \in \Gamma^*$ such that $h(\tau) = p$,
the player has an \emph{almost-sure winning} strategy from $[\tau]_{\sim}$.
We denote by $P_{\forall}$ the set of all universally winning states.
Note that the set $P_{\forall}$ is closed under $\approx$, by an argument
analogous to the case of $P_{\exists}$.
As we consider a reachability objective, it is easy to see that 
$P_F \subseteq P_{\forall} \subseteq P_{\exists}$ (the latter
inclusion holds since $h$ is surjective). 

Note that in games of perfect information, the existentially and universally winning
states coincide by definition since information sets are singletons, $P_{\forall} = P_{\exists}$.
A corollary of our results is that $P_{\forall} = P_{\exists}$ also 
holds in games of imperfect information with a reachability objective.

\subsection{Algorithm}

We present an algorithm to decide if the player is almost-sure winning
for reachability in a game of imperfect information, assuming we have a
rectangular morphism.  

Given a set $X \subseteq P \cup (P \times A)$, define the 
\emph{predecessor} operator as follows:

\begin{align*}
\Pre(X) = \; & \{p \in P \mid \exists a \in A: (p,a) \in X\} \\
             & \cup \\
             & \{(p,a) \in P \times A \mid \forall c : \act(c) = a \implies \delta^P(p,c) \in X\}
\end{align*}

It is easy to verify that if
$X_1 \subseteq X_2$, then $\Pre(X_1) \subseteq \Pre(X_2)$,
that is $\Pre(\cdot)$ is monotone.
Intuitively, from a history $\tau$ such that $h(\tau) \in \Pre(X)$
the player can choose an action $a$ such that $(h(\tau),a) \in X$,
and if $(h(\tau),a) \in \Pre(X)$ then for all moves~$c$
chosen by the environment and supported by action $a$
we have  $h(\tau c) \in X$. 
In a game of perfect information, iterating the predecessor operator
from the target set $P_F$ until obtaining a fixpoint 
$X_{*} = \mu X. \Pre(X) \cup P_F$
gives the (existentially and universally) winning states,
that is $P_{\forall} = P_{\exists} = X_{*}$.

In a game of imperfect information with randomized strategies for the player,
given $X_{i+1} = \Pre(X_i)$,
from a history $\tau$ such that $h(\tau) \in X_{i+1}$,
we may consider playing all actions $a$ such that $(h(\tau),a) \in X_i$
uniformly at random. However, an action played in a history 
$\tau$ is also played in every indistinguishable history $\tau' \sim \tau$. Therefore,  
we need to ensure that for all actions $a$ played in $\tau$
(such that $(h(\tau),a) \in X_i$), playing $a$ in $\tau'$
does not leave $X_{*}$. Hence for $p =  h(\tau)$,
even if $(p,a) \in X_{*}$, the action $a$ should not
be played from $\tau$ if there exists $p' \approx p$
such that $(p',a) \not\in X_{*}$.

In our algorithm, we remove from $X_{*}$ all elements $(p,a)$
such that $(p',a) \not\in X_{*}$ for some $p' \approx p$,
that is we replace $X_{*}$ by its interior.
We define $$Y_{*} = \nu Y. \, \mu X. \,\, \interior_{\approx}(Y) \cap (\Pre(X) \cup P_F)  $$
and we show that the sets $Y_{*}$, $P_{\forall}$, and $P_{\exists}$
coincide. Note that the $\mu$-calculus formula for $Y_{*}$ is well defined
since $\Pre(\cdot)$ and $\interior_{\approx}(\cdot)$ are monotone operators, 
and that the fixpoint can be effectively computed since $P$ is finite.

As the fixpoint satisfies  $Y_{*} = \interior_{\approx}(Y_{*}) \cap (\Pre(Y_{*}) \cup P_F)$,
it follows that $Y_{*} = \interior_{\approx}(Y_{*})$ is closed under $\approx$,
and $Y_{*} \subseteq \Pre(Y_{*}) \cup P_F$. Moreover it is easy
to verify from the fixpoint iteration that $P_F \subseteq Y_{*}$ 
since $P_F$, which is closed under $\approx$, is always contained
in the least fixpoint, and never removed from the greatest fixpoint.

\begin{theorem}\label{theo:fixpoint}
The abstract states in the fixpoint $Y_*$ are the existentially winning states,
which coincide with the universally winning states,
$Y_* \cap P = P_{\forall} = P_{\exists}$.
\end{theorem}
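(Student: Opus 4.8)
The plan is to prove the chain of inclusions $P_F \subseteq Y_* \cap P \subseteq P_{\forall} \subseteq P_{\exists} \subseteq Y_* \cap P$, so that all four sets coincide together with $P_{\forall}$; combined with the already-noted inclusions $P_F \subseteq P_{\forall} \subseteq P_{\exists}$, it suffices to establish the two non-trivial directions: (a) every state in $Y_* \cap P$ is universally winning, and (b) every existentially winning state lies in $Y_* \cap P$.

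For direction (a), the idea is to extract from the fixpoint $Y_*$ an explicit almost-sure winning strategy and to bound the number of steps to reach a target. Recall $Y_* = \interior_{\approx}(Y_*) \cap (\Pre(Y_*) \cup P_F)$, so $Y_*$ is closed under $\approx$ and $Y_* \subseteq \Pre(Y_*) \cup P_F$. Unfolding the inner least fixpoint, write $Y_* = \bigcup_{i} X_i$ where $X_0 = \emptyset$ and $X_{i+1} = \interior_{\approx}(Y_*) \cap (\Pre(X_i) \cup P_F)$; assign to each $p \in Y_* \setminus P_F$ a rank $r(p)$ equal to the least $i$ with $p \in X_{i+1}$. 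The strategy $\straa$ is defined on a history $\tau$ with $h(\tau) = p \in Y_* \cap P$, $p \notin P_F$, by playing uniformly at random every action $a$ such that $(p,a) \in X_{r(p)}$ (there is at least one such $a$ since $p \in \Pre(X_{r(p)-1}) \cup P_F$ and $p \notin P_F$). The two key points to verify: first, this is information consistent and never leaves $Y_*$ — because $(p,a) \in X_{r(p)} \subseteq \interior_{\approx}(Y_*)$, for every $p' \approx p$ we have $(p',a) \in Y_*$ by Lemma~\ref{lem:approx-equiv}, hence $(p',a) \in \Pre(Y_*)$ as it is not removed, so $\delta^P(p',c) \in Y_*$ for every move $c$ supported by $a$; this means playing $a$ in any indistinguishable history keeps $h$ inside $Y_*$. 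Since the coloring is information-consistent and $h$ refines $\lambda$, the strategy's choice is well defined on the information set. Second, against any environment strategy the rank strictly decreases with positive probability bounded below: from $p = h(\tau) \in Y_*$ with rank $r(p) = i+1$, since $(p,a) \in X_{i+1}$ implies $(p,a) \in \Pre(X_i)$, every move $c$ supported by $a$ satisfies $\delta^P(p,c) \in X_i$, so $h(\tau c) \in X_i$, i.e. either $h(\tau c) \in P_F$ or its rank is $\le i$. Because at each step the player plays $a$ with probability at least $1/\abs{A}$ and the environment must respond by some supported move, there is a uniformly positive probability (at least $1/\abs{A}$ times the minimal positive transition probability, but in fact just $\ge 1/\abs{A}$ since ranks decrease for \emph{all} supported moves) that the rank drops. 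Hence within $\abs{P}$ steps a target is reached with probability bounded below by a constant, and iterating, the probability of never reaching $P_F$ is zero. Since this holds from every history mapping into $Y_* \cap P$, every such state is universally winning.

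For direction (b), suppose $p \in P_{\exists}$, so there is $\tau$ with $h(\tau) = p$ and an almost-sure winning strategy $\straa$ from $[\tau]_{\sim}$; I must show $p \in Y_*$. I would argue the contrapositive at the level of the greatest-fixpoint iteration: write $Y_* = \bigcap_k Y^{(k)}$ with $Y^{(0)} = P \cup (P \times A)$ and $Y^{(k+1)} = \mu X.\, \interior_{\approx}(Y^{(k)}) \cap (\Pre(X) \cup P_F)$, and show by induction on $k$ that every existentially winning $p$ stays in $Y^{(k)} \cap P$. The base case is trivial. For the step, fix $\straa$ almost-sure winning from $[\tau]_{\sim}$ with $h(\tau) = p$, and consider the set $S$ of abstract states reachable, under $\straa$ and some environment strategy, from some history in $[\tau]_{\sim}$; every state of $S$ is existentially winning (the residual strategy is almost-sure winning), hence $S \subseteq Y^{(k)} \cap P$ by induction, and in fact $S$ is closed under $\approx$, so $S \subseteq \interior_{\approx}(Y^{(k)})$ when intersected appropriately — more precisely one shows the positions visited stay in $\interior_{\approx}(Y^{(k)})$. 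Now within this reachable set, almost-sure winning for reachability against all (pure) environment strategies forces, by a standard attractor argument, that from $p$ some finite number of steps suffices to guarantee a target: define $Z_j$ as the states of $S$ from which the player guarantees reaching $P_F$ within $j$ steps while staying in $S$; one checks $Z_0 = P_F \cap S$ and $Z_{j+1} \supseteq Z_j$ with $Z_{j+1}$ obtained by a $\Pre$-style step restricted to $\interior_{\approx}(Y^{(k)})$, and almost-sure winning implies $\bigcup_j Z_j$ contains $p$ (otherwise the environment, avoiding the attractor, would have positive probability of never reaching $P_F$). This exhibits $p$ inside the least fixpoint defining $Y^{(k+1)}$, hence $p \in Y^{(k+1)} \cap P$. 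Taking the limit over $k$ gives $p \in Y_*$.

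The main obstacle is direction (b), specifically the argument that an almost-sure winning randomized strategy, which a priori may use intricate memory and carefully tuned probabilities, can be "abstracted" to witness membership of $p$ in the finitary fixpoint — the subtlety being that $h$ may collapse histories that the strategy distinguishes, and indistinguishable histories (across the $\approx$-class) must all be simultaneously safe. The technical heart is to show that the set of abstract states reachable under the winning play is closed under $\approx$ and stays inside $\interior_{\approx}$ of the previously-computed set, so that the restricted attractor computation inside $\interior_{\approx}(Y^{(k)})$ actually captures $p$; this is where Lemma~\ref{lem:approx-equiv} and the rectangularity of $h$ do the essential work. Direction (a) is comparatively routine once the rank function and the uniform-lower-bound-on-progress observation are in place.
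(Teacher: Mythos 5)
Your overall decomposition is the paper's: the chain $Y_*\cap P\subseteq P_{\forall}\subseteq P_{\exists}\subseteq Y_*\cap P$, with a rank-based randomized strategy for the first inclusion and a fixpoint/spoiling-strategy argument for the last. However, there is a concrete flaw in your direction (a). You define $\straa(\tau)$ as the uniform distribution over $\{a \mid (h(\tau),a)\in X_{r(h(\tau))}\}$, where $X_i$ are the stages of the inner least fixpoint. The sets $X_i$ for $i\geq 2$ are \emph{not} closed under $\approx$ (the operator $\Pre$ does not commute with $\interior_{\approx}$), so neither the rank $r(\cdot)$ nor your action sets are invariant on $\approx$-classes; for $\tau\sim\tau'$ with $h(\tau)\approx h(\tau')$ but $h(\tau)\neq h(\tau')$ your strategy may prescribe different distributions, i.e.\ it is not information-consistent. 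Your stated justification only shows that playing $a$ from any $p'\approx p$ stays inside $Y_*$, which is a different claim. The fix is the paper's choice: play uniformly over $A_p=\{a\mid (p,a)\in Y_*\}$, which is $\approx$-invariant because $Y_*=\interior_{\approx}(Y_*)$; the progress argument survives because the rank-decreasing action still lies in $A_p$ and is drawn with probability at least $1/\abs{A}$.

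In direction (b) you stop exactly at the step you yourself call the technical heart: that the state-action pairs visited under the almost-sure winning strategy lie in $\interior_{\approx}$ of the previously computed set. The paper supplies this as the observation $(\star)$: if $\straa_{\sf as}$ is almost-sure winning from $[\tau]_{\sim}$ and $a\in\Supp(\straa_{\sf as}(\tau))$, then $(h(\tau),a)\in\interior_{\approx}(\Pre(P_{\exists}))$ (this uses Lemma~\ref{lem:approx-equiv} and the fact that the residual strategy remains almost-sure winning from $[\tau c]_{\sim}$ for every supported move $c$). With that in hand, the paper avoids your induction on the outer $\nu$-iteration entirely: it exhibits $Y_{\exists}=P_{\exists}\cup\interior_{\approx}(\Pre(P_{\exists}))$ as a fixpoint of the outer operator and concludes $P_{\exists}\subseteq Y_*$ from maximality of the greatest fixpoint, with the spoiling-strategy contradiction playing the role of your attractor argument. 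Your stage-by-stage induction would also need to be strengthened to carry the state-action pairs $\interior_{\approx}(\Pre(P_{\exists}))$ along with $P_{\exists}$, since the inner least fixpoint is computed over $P\cup(P\times A)$; as stated, your induction hypothesis covers only states.
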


Theorem~\ref{theo:fixpoint} entails the decidability of the existence
of an almost-sure winning strategy in games with reachability objective:
if there exists an almost-sure winning strategy (from $\epsilon$),
then $h(\epsilon) \in P_{\exists} = Y_*$, and conversely if $h(\epsilon) \in Y_*$
then $h(\epsilon) \in P_{\forall}$ since $h(\epsilon) \in P$,
and there exists an almost-sure winning strategy from $[\epsilon]_{\sim} = \{\epsilon\}$.

\begin{corollary}
There exists an almost-sure winning strategy if and only if $h(\epsilon) \in Y_*$.
\end{corollary}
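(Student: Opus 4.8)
The plan is to derive both implications directly from Theorem~\ref{theo:fixpoint}, treating the corollary as its instantiation at the initial abstract state $h(\epsilon)$. Two elementary observations set this up. First, by the definition in Section~\ref{sec:def:strat}, an \emph{almost-sure winning strategy} (with no qualifier) is precisely an almost-sure winning strategy from the information set $[\epsilon]_{\sim}$. Second, since indistinguishable histories have the same length, $\epsilon$ is the unique history of length~$0$, so $[\epsilon]_{\sim} = \{\epsilon\}$ is a singleton; and $h(\epsilon) \in P$ because $h$ maps $\Gamma^*$ into $P$. With these in hand, no further combinatorial work is needed beyond quoting the theorem.

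For the forward direction, I would suppose there exists an almost-sure winning strategy, i.e.\ one winning from $[\epsilon]_{\sim}$. Taking $\tau = \epsilon$ in the definition of existentially winning states, the witness $h(\epsilon)$ shows $h(\epsilon) \in P_{\exists}$. By Theorem~\ref{theo:fixpoint} we have $P_{\exists} = Y_{*} \cap P$, and since $Y_{*} \cap P \subseteq Y_{*}$ this yields $h(\epsilon) \in Y_{*}$.

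For the backward direction, I would suppose $h(\epsilon) \in Y_{*}$. Combined with $h(\epsilon) \in P$ this gives $h(\epsilon) \in Y_{*} \cap P = P_{\forall}$, again by Theorem~\ref{theo:fixpoint}. Unfolding the definition of universally winning states, for every history $\tau$ with $h(\tau) = h(\epsilon)$ the player has an almost-sure winning strategy from $[\tau]_{\sim}$; instantiating with $\tau = \epsilon$ produces an almost-sure winning strategy from $[\epsilon]_{\sim} = \{\epsilon\}$, which is exactly an almost-sure winning strategy as required.

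Since this is a direct specialization of Theorem~\ref{theo:fixpoint}, I do not expect any genuine obstacle. The only subtlety worth flagging is the deliberate asymmetry in which characterization is invoked: the forward direction uses the \emph{existential} reading $P_{\exists}$ (one good history suffices to witness a strategy), whereas the backward direction uses the \emph{universal} reading $P_{\forall}$ (membership in $Y_{*}$ must guarantee a strategy at the particular history $\epsilon$). This switch is harmless precisely because Theorem~\ref{theo:fixpoint} collapses $P_{\exists}$ and $P_{\forall}$ to the common value $Y_{*} \cap P$, so both readings agree at $h(\epsilon)$.
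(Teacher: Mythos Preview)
Your argument is correct and matches the paper's own derivation essentially line for line: the forward direction goes through $P_{\exists}$, the backward direction uses $h(\epsilon)\in P$ to land in $P_{\forall}$, and the observation $[\epsilon]_{\sim}=\{\epsilon\}$ closes the loop. There is nothing to add.
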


The result of Theorem~\ref{theo:fixpoint} follows from the chain
of inclusions  $Y_* \cap P \subseteq P_{\forall}$ (Lemma~\ref{lem:Pforall}),
the already established $P_{\forall} \subseteq P_{\exists}$, and
$P_{\exists} \subseteq Y_* \cap P$ (Lemma~\ref{lem:Pexists}),

To show that $Y_* \cap P \subseteq P_{\forall}$ we construct 
a strategy for the player that is almost-sure winning from all $\tau$ (and from $[\tau]_{\sim}$)
such that $h(\tau) \in  Y_*$. The strategy plays uniformly at random 
all actions that ensure the successor $\tau c$ of $\tau$ remains
in $Y_*$, more precisely $h(\tau c) \in Y_*$. 
We show that at least one such action ensures progress towards reaching
a target state in $P_F$, thus with probability at least $\frac{1}{\abs{A}}$.
The target is reachable in at most $\abs{P}$ steps, which entails
bounded probability (at least $\nu = \frac{1}{\abs{A}^{\abs{P}}}$) to reach
$P_F$ (from every state of $Y_*$) and since the strategy ensures that $Y_*$ is never left, 
the probability of \emph{not} reaching $P_F$ is at most $(1-\nu)^k$ 
for all $k \geq 0$, and as $(1-\nu)^k \to 0$ when $k \to \infty$
the probability to eventually reach $P_F$ is $1$.


\begin{restatable}{lemma}{Pforall}
\label{lem:Pforall}%
$Y_*  \cap P \subseteq P_{\forall}$.
\end{restatable}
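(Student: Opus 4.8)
The plan is to construct an explicit almost-sure winning strategy for the player from every history $\tau$ with $h(\tau) \in Y_*$. Following the proof sketch already given in the text, the strategy $\straa$ will play, from a history $\tau$, uniformly at random all actions $a \in A$ such that for every move $c$ with $\act(c) = a$ we have $h(\tau c) \in Y_*$; equivalently, all $a$ such that $(h(\tau),a) \in \Pre(Y_*)$ holds and stays in the interior. First I would check that this set of actions is nonempty whenever $h(\tau) \in Y_* \cap P$ and $h(\tau) \notin P_F$: since $Y_* \subseteq \Pre(Y_*) \cup P_F$, there is at least one action $a$ with $(h(\tau),a) \in \Pre(Y_*) \cup P_F$, hence $(h(\tau),a) \in Y_*$ by the fixpoint equation $Y_* = \interior_{\approx}(Y_*) \cap (\Pre(Y_*) \cup P_F)$ — and because $Y_* = \interior_{\approx}(Y_*)$ is closed under $\approx$, the pair $(p',a) \in Y_*$ for every $p' \approx p$, so in particular $\delta^P(p',c) \in Y_*$ for every $c$ with $\act(c) = a$. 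This is exactly what guarantees that the strategy is well defined as an information-consistent strategy (the same action set is offered at every indistinguishable history, because $h$ is rectangular and $Y_*$ is $\approx$-closed) and that it never leaves $Y_*$: whatever move the environment picks in response to a chosen action $a$, the resulting abstract state lies in $Y_*$.

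Next I would use the inner least-fixpoint structure to extract a ranking function witnessing bounded-length progress to $P_F$. Write $Y_* = \mu X.\, \interior_{\approx}(Y_*) \cap (\Pre(X) \cup P_F)$ and let $X_0 \subseteq X_1 \subseteq \cdots$ be the approximants, so $Y_* = \bigcup_i X_i = X_{|P|}$. Define $\Rank(p)$ to be the least $i$ with $p \in X_i$. For a non-target state $p \in Y_*$ with rank $i \geq 1$, the fixpoint gives $p \in \Pre(X_{i-1})$, so there exists an action $a$ with $(p,a) \in \Pre(X_{i-1})$, i.e. $\delta^P(p,c) \in X_{i-1}$ for all $c$ supported by $a$. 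The key point is that this same $a$ is among the actions the strategy plays with positive probability: we need $(p,a) \in Y_*$, and indeed $(p,a) \in \Pre(X_{i-1}) \subseteq \Pre(Y_*)$, and $(p',a) \in Y_*$ for all $p' \approx p$ because, running the same rank argument at $p'$ (which has the same rank since $X_i$ are $\approx$-closed, being intersections with $\interior_{\approx}(Y_*)$), one finds $(p',a') \in Y_*$ for a rank-decreasing $a'$ at $p'$ — but here I must be careful: the rank-decreasing action at $p'$ need not be $a$. The clean way around this is to argue directly that $(p,a) \in \interior_{\approx}(Y_*)$ for the chosen $a$, which is what we actually need; and this holds because $\Pre(X_{i-1}) \cup P_F \subseteq Y_*$ would require knowing $(p,a)$ survives the interior operation. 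I expect this to be the main obstacle, and I would resolve it as follows: show by induction on $i$ that each approximant $X_i$ satisfies $X_i \subseteq \interior_{\approx}(X_i)$ restricted appropriately — more precisely, that for every $(p,a) \in X_i$ and every $p' \approx p$, $(p',a) \in X_i$. This follows since $X_i = \interior_{\approx}(Y_*) \cap (\Pre(X_{i-1}) \cup P_F)$, and both $\interior_{\approx}(Y_*)$ and (by the induction hypothesis plus monotonicity of $\Pre$ and $\approx$-closure of $P_F$) the set $\Pre(X_{i-1}) \cup P_F$ are $\approx$-closed on the $P \times A$ component. Hence the action $a$ chosen at $p$ is genuinely in the strategy's support at every indistinguishable history, and each such step decreases the rank by at least one.

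With the ranking in hand, the probabilistic estimate is routine: from any history $\tau$ with $h(\tau) \in Y_* \setminus P_F$, the strategy plays some rank-decreasing action with probability at least $\frac{1}{|A|}$, and against any environment response the rank of the abstract state strictly decreases, so after at most $|P|$ such favorable choices the play reaches $P_F$. Thus from every $h(\tau) \in Y_*$, within $|P|$ rounds the play reaches $P_F$ with probability at least $\opnu = \frac{1}{|A|^{|P|}}$, uniformly in $\tau$ and in the environment strategy, and the strategy guarantees the play stays in $Y_*$ throughout (so the estimate can be re-applied). Therefore the probability of never reaching $P_F$ after $k|P|$ rounds is at most $(1-\opnu)^k \to 0$, giving $\Prb^{\straa,\strab}_{\tau}(\Reach) = 1$ for every $\strab$ and every $\tau' \in [\tau]_{\sim}$ (the latter because the strategy is information consistent and $h$, being rectangular, sends $[\tau]_\sim$ into a single $\approx$-class contained in $Y_*$). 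Since this holds for every $\tau$ with $h(\tau) = p$, we conclude $p \in P_{\forall}$, i.e. $Y_* \cap P \subseteq P_{\forall}$. I would also remark that, strictly speaking, it suffices to consider pure environment strategies (as noted after the definition of almost-sure winning), which slightly simplifies the bookkeeping but is not essential.
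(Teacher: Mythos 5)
Your overall plan is the paper's: play the uniform distribution over $A_p=\{a\mid (h(\tau),a)\in Y_*\}$, use $\approx$-closure of $Y_*$ for information consistency and for the invariance of $Y_*$, extract a rank from the approximants of the inner least fixpoint to get a uniform positive probability of reaching $P_F$ within boundedly many rounds, and conclude with $(1-\nu)^k\to 0$. The non-emptiness of $A_p$, the invariance argument, and the final probabilistic estimate all match the paper's proof.

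The step you flag as ``the main obstacle'' is where the write-up goes wrong, and the obstacle is an artifact of misreading $\Pre$. For a state $p$ of rank $i$, the fixpoint gives $p\in\Pre(X_{i-1})$, which on the $P$-component means precisely that there exists $a$ with $(p,a)\in X_{i-1}$ --- not $(p,a)\in\Pre(X_{i-1})$ as you write. Since $X_{i-1}\subseteq Y_*$, this immediately yields $(p,a)\in Y_*$, i.e.\ $a\in A_p$, and one further unfolding (pairs are not in $P_F$, so $(p,a)\in\Pre(X_{i-2})$) gives $\delta^P(p,c)\in X_{i-2}$ for every supported $c$; the rank drops by two per round. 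Nothing more is needed. Your proposed patch --- that each approximant $X_i$ is $\approx$-closed on the $P\times A$ component --- is both unnecessary and false in general: for $p\approx p'$ the successors $\delta^P(p,c)$ and $\delta^P(p',c)$ need not be $\approx$-related, so $(p,a)$ may have all its successors in $X_{i-1}$ while $(p',a)$ does not and only enters the fixpoint at a later stage; only the limit $Y_*$ is $\approx$-closed, via the outer $\interior_\approx$. Moreover, even if the closure claim held, $(p,a)\in\interior_\approx(\Pre(X_{i-1}))$ still would not place $(p,a)$ in $Y_*$ without separately knowing $(p,a)\in\interior_\approx(Y_*)$, so the patch would not close the gap it was meant to close. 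With the corrected reading of $\Pre$ the argument goes through exactly as in the paper (with the number of rounds bounded by $|Y_*|$ rather than $|P|$, which only affects the constant $\nu$).
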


The last inclusion of the chain is proved by showing that there exists
a fixpoint that contains $P_{\exists}$ and since $Y_*$ is defined as 
the greatest fixpoint, we have $P_{\exists} \subseteq Y_*$.


\begin{restatable}{lemma}{Pexists}
\label{lem:Pexists}%
   $P_{\exists} \subseteq Y_* \cap P$.
\end{restatable}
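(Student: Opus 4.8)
The plan is to prove $P_{\exists} \subseteq Y_*$ by exhibiting a set $Z \subseteq P \cup (P \times A)$ that contains $P_{\exists}$ and is a (pre-)fixpoint of the operator $\Phi(Y) = \mu X.\, \interior_{\approx}(Y) \cap (\Pre(X) \cup P_F)$, i.e.\ $Z \subseteq \Phi(Z)$; since $Y_*$ is the \emph{greatest} fixpoint of $\Phi$ and $\Phi$ is monotone, Knaster--Tarski gives $Z \subseteq Y_*$, and then $P_{\exists} \subseteq Z \cap P \subseteq Y_* \cap P$. The natural candidate is to build $Z$ directly from almost-sure winning strategies: let $Z \cap P = P_{\exists}$, and put $(p,a) \in Z$ exactly when $p \in P_{\exists}$ \emph{and} $a$ is an action that some almost-sure winning strategy plays with positive probability from (a history mapped by $h$ to) $p$ while keeping every $h$-successor inside $P_{\exists}$. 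One must check $Z$ is $\approx$-closed on its $P$-part (true, since $P_{\exists}$ is $\approx$-closed by Lemma~\ref{lem:approx-equiv}) and more delicately that the $(P\times A)$-part survives the $\interior_{\approx}$ operation --- this is where rectangularity is essential: if $p \approx p'$, an action played from a history mapping to $p$ is, by information-consistency of strategies, also played from the indistinguishable history mapping to $p'$, and rectangularity ensures the two sets $h([\tau]_{\sim})$ and $h([\tau']_{\sim})$ coincide, so ``keeping $h$-successors in $P_{\exists}$'' transfers from $p$ to $p'$.

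The key step is then to verify $Z \subseteq \mu X.\, \interior_{\approx}(Z) \cap (\Pre(X) \cup P_F)$. Writing $X_*^Z$ for this least fixpoint, I would argue by a well-foundedness / rank argument on the information tree restricted to $P_{\exists}$: fix an almost-sure winning strategy $\straa$ from an information set $[\tau]_{\sim}$ with $h(\tau)=p \in P_{\exists}$. Against a pure environment strategy, $\straa$ reaches $P_F$ with probability $1$ from $[\tau]_{\sim}$, so by König's lemma / the standard attractor argument there is a finite bound on the number of steps needed to guarantee positive progress, and in fact one can stratify the states of $P_{\exists}$ by the minimal ``distance to $P_F$ under $\straa$ in the abstract game'': a state $p$ has rank $0$ if $p \in P_F$, and rank $k+1$ if from every history mapping to $p$ the strategy $\straa$ plays (with positive probability) some action $a$ all of whose supported moves $c$ lead to $h(\tau c)$ of rank $\le k$ and inside $P_{\exists}$. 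One shows this rank is well-defined and finite on all of $P_{\exists}$ precisely because an almost-sure winning strategy cannot let the distance-to-target stay unbounded on a positive-probability set of histories; and $Z$ restricted to rank $\le k$ is contained in the $k$-th iterate $X_k$ of the least-fixpoint computation for $\Phi(Z)$, by induction on $k$.

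The main obstacle I anticipate is the well-definedness and finiteness of that rank --- concretely, showing that from \emph{every} $p \in P_{\exists}$ there is an action played by \emph{some} almost-sure winning strategy that strictly decreases the rank while staying in $P_{\exists}$ (not merely staying in $P$). The subtlety is that $P_{\exists}$ is defined existentially --- ``there exists a history $\tau$ with $h(\tau)=p$ and an almost-sure winning strategy from $[\tau]_\sim$'' --- so different abstract states witnessing membership may come from different histories and different strategies, and one must make sure the successor abstract states one lands in are again existentially winning. Here I would use that almost-sure winning is preserved along compatible histories (if $\straa$ is almost-sure winning from $[\tau]_\sim$ and $h(\tau c) = p'$ with $c$ supported by an action played with positive probability, then the residual strategy is almost-sure winning from $[\tau c]_\sim$, witnessing $p' \in P_\exists$), together with the fact that some positive-probability action must make progress --- otherwise, by the pigeonhole/attractor argument, the environment could keep the play forever at distance $\ge$ current rank with positive probability, contradicting almost-sure winning. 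Once the rank is in hand, the inductive containment $Z \cap (\text{rank} \le k) \subseteq X_k$ and hence $Z \subseteq X_*^Z \subseteq \Phi(Z)$ follows by unwinding the definitions of $\Pre$ and $\interior_{\approx}$, and we conclude $P_\exists \subseteq Z \cap P \subseteq Y_* \cap P$.
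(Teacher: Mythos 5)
Your proposal is correct and follows essentially the same route as the paper's proof: exhibit a post-fixpoint built from $P_{\exists}$ together with the state–action pairs supported by almost-sure winning strategies (the paper's $Y_{\exists} = P_{\exists} \cup \interior_{\approx}(\Pre(P_{\exists}))$), use that almost-sure winning is inherited along supported moves to handle the existential definition of $P_{\exists}$, and conclude by the greatest-fixpoint property. Your rank-finiteness argument for the inner least-fixpoint containment is just the contrapositive packaging of the paper's construction of a pure spoiling environment strategy that traps the play outside $X_*$ (hence outside $P_F$) forever.
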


The fixpoint computation for $Y_*$ can be implemented by a quadratic
algorithm, with respect to the number $\abs{P}$ of abstract states
(assuming a constant number of actions).

Both stochastic games on graphs and B\"uchi objectives
are subsumed by the results of this paper: in stochastic games, 
almost-sure B\"uchi reduces to almost-sure reachability~\cite[Lemma~8.3]{BGB12},
and games with stochastic transitions can be simulated by (non-stochastic) games 
with imperfect information~\cite[Theorem~5]{CDGH15}. These results can easily be lifted
to the more general framework of games with imperfect information defined
by indistinguishability relations.
On the other hand, we recall that almost-sure coB\"uchi is undecidable,
already in probabilistic automata with pure strategies~\cite{BGB12},
and also with randomized strategies~\cite[Corollary~2]{CDGH15},
but without the assumption that the coloring function is information consistent.
For the special case of information-consistent coB\"uchi objectives, the decidability status
of the synthesis problem for almost-sure winning is open (even in games {\it \`a la Reif}), 
to the best of our knowledge.
Finally for safety objectives, almost-sure winning 
is equivalent to sure winning (which requires that all possible outcomes 
satisfy the objective), and the problem is equivalent to synthesis with pure strategies~\cite{BDS23}.


\begin{theorem}\label{theo:as-reach}
Given a regular game $\G$ with imperfect information and a rectangular morphism $h: \Gamma^* \to P$ for $\G$,
the synthesis problem for almost-sure reachability and B\"uchi objectives can be solved in time $O(\abs{P}^2)$.
\end{theorem}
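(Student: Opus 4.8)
The plan is to combine the fixpoint characterization of the almost-sure winning set (Theorem~\ref{theo:fixpoint}) with a complexity analysis of the nested $\mu$-calculus computation for $Y_*$, and then to lift the reachability result to B\"uchi objectives via a standard reduction. First I would address the reachability case. By Theorem~\ref{theo:fixpoint}, deciding whether the player is almost-sure winning amounts to computing $Y_* = \nu Y.\, \mu X.\, \interior_{\approx}(Y) \cap (\Pre(X) \cup P_F)$ and checking whether $h(\epsilon) \in Y_*$. The domain of the fixpoint is the finite set $P \cup (P \times A)$, whose size is $O(\abs{P})$ under the assumption that $\abs{A}$ is a constant. The outer greatest fixpoint over $Y$ decreases along a chain and stabilizes after at most $O(\abs{P})$ iterations. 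Each such iteration requires computing an inner least fixpoint over $X$, which itself stabilizes after at most $O(\abs{P})$ iterations of the monotone operator $X \mapsto \interior_{\approx}(Y) \cap (\Pre(X) \cup P_F)$; each application of this operator (evaluating $\Pre$, intersecting with $P_F$, intersecting with the interior) can be done in time linear (or near-linear) in $\abs{P}$, given a representation of $\delta^P$, of $\act$, and of $\approx$. This naively gives a cubic bound $O(\abs{P}^3)$, so the key step is to shave one factor.

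The standard trick for obtaining the quadratic bound is the one used for nested fixpoints in MDPs, concurrent games, and games {\it \`a la Reif}~\cite{CY95,AHK07,CDHR07}: the outer iteration is monotone decreasing, so across all outer iterations the set removed from $Y$ only grows; rather than recomputing each inner least fixpoint from $\emptyset$, one observes that the inner fixpoint computed at outer stage $j+1$ is contained in the one computed at stage $j$ (since $\interior_{\approx}(Y_{j+1}) \subseteq \interior_{\approx}(Y_{j})$ by monotonicity of the interior). One can therefore reuse the previous inner fixpoint as the starting point and only process the elements that actually leave the set. Amortizing, each element of $P \cup (P\times A)$ is "touched" a bounded number of times per unit of progress, and the total work over the whole nested computation is $O(\abs{P}^2)$. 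I would state this amortization argument carefully, tracking which sets are reused and bounding the total number of membership updates, which is where the accounting has to be done precisely.

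For the B\"uchi case, I would invoke the reduction from almost-sure B\"uchi to almost-sure reachability in stochastic games~\cite[Lemma~8.3]{BGB12}, noting as in the discussion preceding the theorem that this reduction lifts to the imperfect-information setting with indistinguishability relations. Concretely, the reduction iterates a reachability-like computation a linear number of times (peeling off states from which the target B\"uchi colors cannot be revisited almost surely), and each such stage runs the reachability algorithm on a game whose abstract state space is (a subset of) $P$. A direct implementation would multiply the reachability cost by $O(\abs{P})$, giving $O(\abs{P}^3)$ again, so here too one needs the observation that the relevant sets shrink monotonically across the B\"uchi outer loop and that the reachability subcomputations can share work, keeping the overall bound at $O(\abs{P}^2)$. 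Care is also needed to check that all the structural properties used by Theorem~\ref{theo:fixpoint} and Lemmas~\ref{lem:Pforall} and~\ref{lem:Pexists} (closure under $\approx$ of the relevant sets, the sink property of targets, and information consistency of the coloring) are preserved by the B\"uchi-to-reachability reduction, so that Theorem~\ref{theo:fixpoint} applies at each stage.

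The main obstacle I expect is the amortized complexity analysis rather than correctness: correctness of the reachability algorithm is already delivered by Theorem~\ref{theo:fixpoint}, and correctness for B\"uchi follows from a known reduction, but getting from the obvious cubic bound down to the claimed $O(\abs{P}^2)$ requires the careful reuse-and-amortization argument for the nested fixpoint, together with a suitable data structure for tracking, for each abstract state $p$, the set of actions $a$ with $(p,a)$ still present and whether all $\approx$-equivalent states agree, so that the interior operation and the $\Pre$ operation do not themselves cost $\Omega(\abs{P})$ per update.
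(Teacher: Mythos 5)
Your overall route is the one the paper takes: correctness of the reachability case is delegated to Theorem~\ref{theo:fixpoint}, the complexity is that of the nested fixpoint for $Y_*$, and B\"uchi is handled by composing the reduction of \cite[Lemma~8.3]{BGB12} with the simulation of stochastic transitions from \cite[Theorem~5]{CDGH15}. However, the one place where you depart from a routine argument is also the place where your argument fails. To avoid a cubic bound you propose to restart each inner $\mu$-iteration from the least fixpoint $X^{(j)}_*$ obtained at the previous outer stage, on the grounds that the new inner fixpoint is contained in the old one. This is unsound for a \emph{least} fixpoint: $X^{(j)}_*$ is a pre-fixpoint of the new operator $\psi'(X) = \interior_{\approx}(Y_{j+1}) \cap (\Pre(X) \cup P_F)$, i.e.\ $\psi'(X^{(j)}_*) \subseteq X^{(j)}_*$, so iterating $\psi'$ from $X^{(j)}_*$ yields a decreasing sequence converging to \emph{some} fixpoint contained in $X^{(j)}_*$, not necessarily the least one. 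A cycle of abstract states that mutually justify one another's membership through $\Pre$ without being connected to $P_F$ would survive such a downward iteration while being absent from $\mu X.\,\psi'(X)$; this is the classic pitfall of maintaining a least fixpoint under deletions.

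The repair is simpler than cross-stage amortization. Each inner least fixpoint is an attractor-style computation that can be evaluated \emph{from scratch} in time linear in the size of the abstract transition structure (of size $O(\abs{P})$ under the paper's convention that the alphabets are constants), using a worklist that keeps, for each pair $(p,a)$, a counter of the moves $c$ with $\delta^P(p,c)\notin X$, and for each $\approx$-class the bookkeeping needed for the interior test. Since the outer $\nu$-iteration strictly shrinks at every stage, there are at most $O(\abs{P})$ outer stages, and $O(\abs{P}^2)$ follows with no reuse between stages. For B\"uchi, your worry about an extra $\abs{P}$ factor does not arise with the reduction the paper invokes: it is a one-shot gadget construction (a fixed-probability restart-or-goal branch at accepting positions, followed by de-randomization via imperfect information), not an iterative peeling, so a single reachability computation on a game with $O(\abs{P})$ abstract states suffices. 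What genuinely deserves the care you flag at the end is verifying that this composed reduction preserves information consistency of the coloring and yields a rectangular morphism for the transformed game, so that Theorem~\ref{theo:fixpoint} applies to it; the paper asserts this lifts ``easily'' without detail, and your instinct to check it is right.
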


Since $\abs{P}$ is already exponential in games {\it \`a la Reif} (for which 
the synthesis problem is EXPTIME-complete),
the quadratic blow-up is not significant. 
For example, we get a $(k+1)$-EXPTIME complexity upper bound 
for the synthesis problem in the model of full-information 
protocols (FIP) with $k$ observers, using the rectangular morphism of~\cite[Section~5.2]{BDS23}.
The model of full-information protocols as presented in~\cite{BDS23}
features explicit communication actions that entail full disclosure
of all available information of the sender. The game involves
several players who may communicate, but only one active player who 
takes control actions, which reduces the synthesis problem to
a game with imperfect information as considered here.

A matching $(k+1)$-EXPTIME lower bound can be obtained by a straightforward adaptation
of the proof of~\cite[Theorem~3]{BDS23}, which presents a reduction from 
the membership problem for alternating $k$-EXPSPACE Turing machines
to the synthesis problem of (pure) winning strategies.
The reduction is such that in the constructed game, randomization does not help
the player. If the player takes a chance in deviating from the faithful simulation of
the Turing machine, a losing sink state is reached, thus with positive probability.

\begin{theorem}\label{theo:FIP-synthesis-complexity}
The synthesis problem for FIP games with $k$ observers (and almost-sure reachability objective) 
is $(k+1)$-EXPTIME-complete.
\end{theorem}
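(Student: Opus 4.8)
The plan is to combine the complexity upper bound that follows from Theorem~\ref{theo:as-reach} with a matching lower bound obtained by revisiting the reduction of~\cite[Theorem~3]{BDS23}. For the upper bound, I would invoke the rectangular morphism for full-information protocols with $k$ observers constructed in~\cite[Section~5.2]{BDS23}, whose range $P$ has size a tower of exponentials of height $k+1$ in the size of the FIP instance. Plugging this into Theorem~\ref{theo:as-reach}, the synthesis problem for almost-sure reachability (and B\"uchi) is solvable in time $O(\abs{P}^2)$, which is still $(k+1)$-EXPTIME since squaring a tower of height $k+1$ stays within the same level of the exponential hierarchy. I would also note that the FIP model has a single active player taking control actions (the other players only communicate, with full disclosure), so that the game it induces is exactly a game with imperfect information in the sense of Section~\ref{sec:def:games}, and Theorem~\ref{theo:as-reach} applies verbatim.

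For the lower bound, I would recall that~\cite[Theorem~3]{BDS23} gives a reduction from the membership problem of alternating $k$-EXPSPACE Turing machines---which is $(k+1)$-EXPTIME-hard---to the synthesis of \emph{pure} (sure) winning strategies in FIP games with $k$ observers. The key observation, which I would spell out, is that this reduction is \emph{robust to randomization}: in the constructed game the player faithfully simulates the Turing machine, and any deviation (guessing a configuration or transition incorrectly, or failing to encode the tape consistently) is detected by the environment and leads to a losing sink. Since a randomized strategy that deviates with any positive probability reaches the sink with positive probability, it cannot be almost-sure winning; hence the player has an almost-sure winning strategy if and only if it has a pure sure winning one, if and only if the Turing machine accepts. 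This transfers the $(k+1)$-EXPTIME lower bound from the pure-strategy problem to the almost-sure reachability problem, and completing the two bounds gives $(k+1)$-EXPTIME-completeness.

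The main obstacle I anticipate is making the ``randomization does not help'' argument fully rigorous against \emph{all} environment strategies and \emph{all} randomized player strategies, rather than just on the level of outcomes: one must verify that every checking gadget in the reduction of~\cite{BDS23} has the property that a losing sink is reachable with positive probability from any history where the player has deviated with positive probability, and that these deviations are the only way to lose. This amounts to a careful but routine audit of the gadgets in~\cite[Section~5]{BDS23}, confirming that the winning condition is (essentially) a reachability/safety condition on a deterministic simulation, so that the measure-theoretic statement reduces to the combinatorial one already established for pure strategies. A secondary, purely bookkeeping point is checking that the stated size bound on the rectangular morphism of~\cite[Section~5.2]{BDS23} indeed yields the $(k+1)$-EXPTIME bound after the quadratic blow-up; this is immediate from the closure of each level of the exponential hierarchy under polynomial (in particular quadratic) functions.
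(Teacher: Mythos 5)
Your proposal matches the paper's argument essentially verbatim: the upper bound is obtained by plugging the rectangular morphism of~\cite[Section~5.2]{BDS23} (of $(k+1)$-fold exponential size) into Theorem~\ref{theo:as-reach}, noting the quadratic blow-up is absorbed, and the lower bound is the adaptation of the reduction of~\cite[Theorem~3]{BDS23} from alternating $k$-EXPSPACE Turing machines, with the observation that any randomized deviation from the faithful simulation reaches a losing sink with positive probability, so randomization does not help. The only difference is that you flag more explicitly the need to audit the gadgets of~\cite{BDS23} to make the ``randomization does not help'' claim rigorous, which the paper dispatches as a straightforward adaptation.
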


The construction of an almost-sure winning strategy $\straa$ can be done by following
the first step in the proof of Lemma~\ref{lem:Pforall}, which constructs $\straa$
given the value of the fixpoint $Y_*$. It follows that the constructed strategy 
is a regular function that can be represented by the automaton underlying 
the rectangular morphism of the game, and thus memory of size $\abs{P}$
is sufficient to define an almost-sure winning strategy.

\section{Reductions}

The results of Section~\ref{sec:as-reachability} may suggest the existence of 
a (strong) correspondence, possibly an equivalence, between the game $\G$
with regular indistinguishability relation and a simpler (finite-state) game $\H$ (essentially the 
abstract game presented in Section~\ref{sec:regular-games})
that would hold for arbitrary objectives.

While it is virtually impossible to establish the non-existence of a reasonably
strong such correspondence, we show for a natural notion of game 
equivalence that such a correspondence does not hold in general. 

\subsection{Alternating probabilistic trace equivalence}

Inspired by the notion of alternating refinement relations~\cite{AHKV98}
in non-stochastic game graphs with perfect information (also called alternating transition systems),
we consider the following definition of \emph{game refinement}.

Given two games $\G$ and $\H$, we say that $\G$ \emph{reduces} to $\H$ (denoted by $\G \preceq \H$)
if for all strategies $\straa_{\G}$ of the player in $\G$, 
there exists a strategy $\straa_{\H}$ of the player in $\H$, such that 
for all strategies $\strab_{\H}$ of the environment in $\H$, 
there exists a strategy $\strab_{\G}$ of the environment in $\G$, such that 
the probability measures $\Prb^{\straa_{\G},\strab_{\G}}$ in $\G$
and $\Prb^{\straa_{\H},\strab_{\H}}$ in $\H$ coincide (on all objectives $L \subseteq C^{\omega}$, that is 
$\Prb^{\straa_{\G},\strab_{\G}}(L) = \Prb^{\straa_{\H},\strab_{\H}}(L)$ for all $L$), in short:

$$\forall \straa_{\G} \cdot \exists \straa_{\H} \cdot \forall \strab_{\H} \cdot \exists \strab_{\G}
\cdot \forall L \subseteq C^{\omega}: \Prb^{\straa_{\G},\strab_{\G}}(L) = \Prb^{\straa_{\H},\strab_{\H}}(L).$$

As expected, the quantifications over strategies of the player range 
over information-consistent randomized strategies (in their respective game),
and the quantification over $L$ ranges over measurable sets.
Under theses quantifications, the condition 
$\Prb^{\straa_{\G},\strab_{\G}}(L) = \Prb^{\straa_{\H},\strab_{\H}}(L)$
can be replaced by $\Prb^{\straa_{\G},\strab_{\G}}(L) \leq \Prb^{\straa_{\H},\strab_{\H}}(L)$
(consider the complement of $L$, which is also a measurable set),
and then game refinement can be interpreted as the game $\H$ being 
easier\footnote{\label{note-words}The words ``easier'',  ``better'', ``lower'', etc. are interpreted in a non-strict sense.} than $\G$
for the player because for every strategy of the player in $\G$ there is a better\footnoteref{note-words}
strategy for the player in $\H$ (as reflected by the quantifier sequence $\forall \straa_{\G} \cdot \exists \straa_{\H}$),
which means that the environment can always ensure a lower\footnoteref{note-words} value in $\G$ (against $\straa_{\G}$)
than in $\H$ (against $\straa_{\H}$), as reflected by the quantifier sequence $\forall \strab_{\H} \cdot \exists \strab_{\G}$).

Note that in the special case of pure strategies (and non-stochastic games)
where the probability measures assign probability~$1$ to a single play, 
game refinement boils down to $$\forall \straa_{\G} \in \Straa^{\pure}_{\G}
\cdot \exists \straa_{\H} \in \Strab^{\pure}_{\H}: \outcome_{\H}(\straa_{\H}) \subseteq \outcome_{\G}(\straa_{\G})$$
where $\outcome_X(\straa)$ denotes the set of plays compatible with $\straa$ in game $X$.
The condition $\outcome_{\H}(\straa_{\H}) \subseteq \outcome_{\G}(\straa_{\G})$
is equivalent to $$\forall \strab_{\H} \in \Strab^{\pure}_{\H} \cdot \exists \strab_{\G} \in \Strab^{\pure}_{\G}: 
\outcome_{\H}(\straa_{\H},\strab_{\H}) = \outcome_{\G}(\straa_{\G},\strab_{\G}),$$
where $\outcome_X(\straa,\strab)$ denotes the set of plays compatible with $\straa$ and $\strab$ in game $X$.
This is essentially the definition of alternating trace containment~\cite[Section~4]{AHKV98}.
Hence our definition of game refinement is a natural generalization of both 
alternating trace containment to a stochastic and imperfect-information setting, 
and of refinement for labeled Markov decision processes~\cite[Section~4]{DHR08} to games.

We say that two games $\G$ and $\H$ are \emph{equivalent} (or inter-reducible)
if $\G \preceq \H$ and $\H \preceq \G$. We sometimes refer to this 
equivalence as \emph{alternating probabilistic trace equivalence}.

\subsection{Bijection and bisimulation}\label{sec:bij-bis}

We compare the approach in Section~\ref{sec:as-reachability} and the technique
for solving almost-sure reachability in games with partial observation
{\it \`a la Reif}~\cite{CDHR07}. Recall that games with partial observation
are a special case of games with regular indistinguishability relation.

The solution of the synthesis problem with randomized strategies 
for games {\it \`a la Reif} established a bijection (denoted by~$h$, but we call it $\hat{h}$ 
to emphasize that it is a mapping between sequences) between the histories in the original
game $\G$ and in a game $\H$ of partial observation played as follows: given 
a history~$\rho$ (initially~$\epsilon$), the player chooses an action $a \in A$,
and the environment extends the history to $\rho' = \hat{h}(\tau c)$ 
where $\hat{h}(\tau) = \rho$ and $c$ is a move (in $\G$) such that $\act(c) = a$.
Two histories are indistinguishable if all their respective prefixes (of the same length)
are $\approx$-equivalent (according to the definition of $\approx$ before Lemma~\ref{lem:approx-equiv}). 
We call $\H$ the \emph{abstract game induced} from $\G$ by $\hat{h}$, and by an 
abuse of notation, we denote by $\approx$ the indistinguishability relation of $\H$.

The bijection $\hat{h}$ naturally extends to a bijection between plays in $\G$ and plays in $\H$,
and further to a bijection between strategies in $\G$ and strategies in $\H$
that preserves probability measures. 
The existence of this bijection immediately ensures that the games $\G$ and $\H$ are equivalent
(in the sense of alternating probabilistic trace equivalence).
Another important consequence is that the information trees of $\G$ and $\H$ are
isomorphic, a situation that is not guaranteed when $\G$ is a game with regular 
indistinguishability relation and $\H$ is a finite-state game (i.e., the set 
$h(\Gamma^*) = \{\last(\hat{h}(\tau)) \mid \tau \in \Gamma^* \}$ is finite). 
Indeed, the branching degree of the information tree of $\G$ may be unbounded~\cite{BD20},
whereas the branching degree of the information tree of $\H$ is necessarily bounded 
since $\H$ is a finite game. It is therefore impossible to follow
the same route as in games {\it \`a la Reif}, using bijections.

Since bijection is too strong for our general setting, a more realistic hope is to rely on 
the existence of a bisimulation between the information tree of $\G$ and the information tree of $\H$~\cite[Theorem~1]{BDS23}.
Note that bisimulation is insensitive to the branching degree. 
The bisimulation ensures that, under pure strategies for the player, the (perfect-information) 
games played on the information trees are equivalent (in the sense
of alternating trace equivalent). The equivalence induced by the bisimulation 
is such that the transfer of strategies (the construction of $\straa_{\H}$ given $\straa_{\G}$)
is of the form $\straa_{\H}(\cdot) = \straa_{\G}(\mu(\cdot))$ where $\mu$ is a mapping between 
bisimilar histories\footnote{In games played over information trees, the histories are sequences of information sets; 
we do not denote them by a symbol to avoid having to define them formally.}
that is sequential, that is the image of a prefix of a history
is a prefix of the image of that history.
So, for pure strategies the action played by $\straa_{\H}$ at a history is a copy of 
the action played by $\straa_{\G}$ in a bisimilar history in $\G$. 
In the case of  randomized strategies, the player chooses a distribution over actions,
which we may view as attaching probabilities to actions, in a way that could be 
copied along with the action to transfer randomized strategies (on one hand from $\G$ to $\H$,
and on the other hand from $\H$ to $\G$) and establish alternating probabilistic trace 
equivalence (containment in both directions) of the two games.

We recall the strong relationships between games with imperfect information
and their induced abstract game (\figurename~\ref{fig:reductions}).
First, there is a natural bijection between an imperfect-information game $\G$ 
and the perfect-information game played on its information tree $\U_{\sim}(G)$,
which is a witness of alternating trace equivalence of the two games~\cite[Lemma~1]{BDS23}.
Second, assuming the existence of a rectangular morphism $h$ for~$\G$, the information
trees of $\G$ and of $\H$ (the abstract game induced from $\G$ by $h$) are bisimilar.
The bijections and the bisimulation (\figurename~\ref{fig:reductions}) entail
alternating trace equivalence of the games $\G$ and $\H$ and their information trees,
and thus equivalence of the synthesis problems with pure strategies in $\G$ and in~\mbox{$\H$~\cite[Lemma~2,Theorem~1]{BDS23}}.

\begin{figure}[!t]
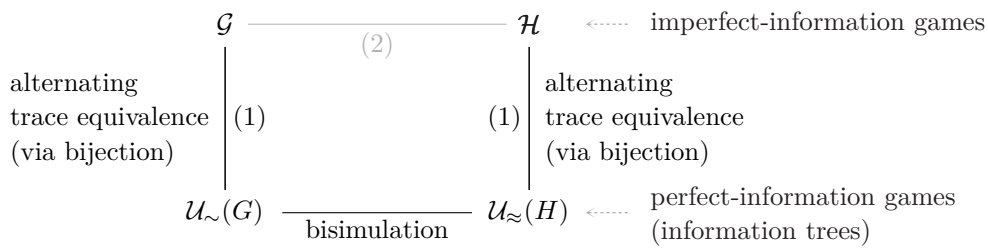

\hrule
\medskip
\begin{center}
    
\begin{gpicture}(130,35)(0,0)


\gasset{Nw=6,Nh=6,Nmr=3}

\node[Nframe=n](nG)(30,30){$\G$}
\node[Nframe=n, Nw=15](nUG)(30,5){$\U_{\sim}(G)$}

\node[Nframe=n](nH)(70,30){$\H$}
\node[Nframe=n, Nw=15](nUH)(70,5){$\U_{\approx}(H)$}




\drawedge[ELside=l,ELpos=50, ELdist=1, AHnb=0](nG,nUG){$(1)$}
\drawedge[ELside=r,ELpos=50, ELdist=0, AHnb=0](nG,nUG){\begin{tabular}{l}alternating\\trace equivalence\\(via bijection)\end{tabular}}

\drawedge[ELside=r,ELpos=50, ELdist=1, AHnb=0](nUG,nUH){bisimulation}
\drawedge[ELside=l,ELpos=50, ELdist=0, AHnb=0](nH,nUH){\begin{tabular}{l}alternating\\trace equivalence\\(via bijection)\end{tabular}}
\drawedge[ELside=r,ELpos=50, ELdist=1, AHnb=0](nH,nUH){$(1)$}

\drawedge[ELside=r,ELpos=50, ELdist=1, AHnb=0, linegray=.7](nG,nH){\graydarker{$(2)$}}

\node[Nframe=n](label)(108,16.3){\begin{tabular}{l}imperfect-information games\\[+53pt] perfect-information games\\ (information trees) \end{tabular}}


\node[Nframe=n, Nw=15](nHbis)(70,30){$\H$}
\node[Nframe=n](label)(86,30){}
\drawedge[ELside=r,ELpos=50, linegray=.6, AHangle=30, AHlength=0.2, dash={0.4 0.4}0](label,nHbis){}
\node[Nframe=n](label)(86,5){}
\drawedge[ELside=r,ELpos=50, linegray=.6, AHangle=30, AHlength=0.2, dash={0.4 0.4}0](label,nUH){}



\end{gpicture}

\end{center} 
\hrule
 \caption{Equivalences between the game $\G$, its induced abstract game $\H$, and their information trees.  \label{fig:reductions}}
\end{figure}

\subsection{Counterexamples}\label{sec:cex}

We present two examples showing the difficulty to adapt the results
of alternating trace equivalence for games with pure strategies,
to alternating probabilistic trace equivalence for games with randomized strategies.

First, we show that with randomized strategies, playing on the information tree 
of a game with imperfect information is not equivalent to playing in the original game,
although this is true for pure strategies (links (1) of \figurename~\ref{fig:reductions}).

The first example is matching pennies (the game $\G$ in \figurename~\ref{fig:game-as}). 
Recall that the set of actions is $A = \{a,b\}$, the set of moves is $\Gamma = \{a_1,a_2,b_1,b_2\}$
with $\act(a_i) = a$ and $\act(b_i) = b$ for $i=1,2$.
It is well known that the player cannot win matching pennies using a pure strategy,
but wins almost-surely by choosing head and tail uniformly at random in every round.
However, in the game played on the information tree of $\G$ (\figurename~\ref{fig:game-as-information-tree}),
the player cannot win, even with positive probability. 
After the first round the information set is $U_2 = \Gamma$, which has two successors
on both actions $a$ and $b$. This is because $U_2$ contains both a history $\tau$
such that $\tau a_i$ is winning and a history $\tau'$ such that $\tau' b_i$ is winning,
hence $U_3$ is a successor of $U_2$ (on both actions $a$ and $b$), and
analogously $X$ is a successor of $U_2$ ($X$ is a placeholder for the whole tree of 
\figurename~\ref{fig:game-as-information-tree}).
Therefore, after any choice of action by the player in $U_2$, the environment
can always choose $X$ and continue the game, ensuring that the reachability 
objective is never satisfied. 

This first example illustrates a crucial difference between pure and randomized
strategies. Considering a game $\G$ (with imperfect information induced
by an indistinguishability relation~$\sim$) and
the game played on its information tree $\U_{\sim}(G)$ (which is of perfect information), 
there is a natural bijection between (information-consistent) strategies in $\G$
and strategies in $\U_{\sim}(G)$. The bijection exists for both pure and randomized
strategies. However, only in the case of pure strategies, the bijection is
a witness of alternating trace equivalence (\figurename~\ref{fig:reductions}).
Intuitively, this is because once a \emph{pure} strategy for the player
is fixed, the environment ``knows'' the specific action played in every history (which is the same
within an information set),  and therefore constructing a path in the original
game $\G$ compatible with the player strategy is equivalent to constructing a path
in $\U_{\sim}(G)$ (informally, the information tree is a valid abstraction for ``existence
of paths''). In contrast, once a \emph{randomized} strategy for the player
is fixed, the environment only  ``knows'' the specific distribution over actions
played in every history, which leaves an element of surprise as to which action
will be played when the strategy is executed. In this context, the information 
tree is no longer a valid abstraction, intuitively because it allows the environment 
to choose a history within the current information set \emph{after} the randomly chosen
action is known, as illustrated by matching pennies (\figurename~\ref{fig:game-as-information-tree}).
To be valid, the abstraction should force the environment to stick to a single
choice of a history, before the action is drawn from the distribution chosen by the player.
We may consider the abstract game $\H$ induced from $\G$ as a candidate for a 
richer abstraction, and establish a direct link (2) between $\G$ and $\H$ (\figurename~\ref{fig:reductions}),
given the link~(1) breaks. Unfortunately,
in our second example $\H$ is not equivalent to $\G$ for alternating probabilistic trace 
equivalence ($\G \not\preceq \H$), showing the absence of such a direct link,
which however exists in partial-observation games {\it \`a la Reif}, 
via the bijection $\hat{h}$ mentioned in Section~\ref{sec:bij-bis}.


\begin{figure}[!t]
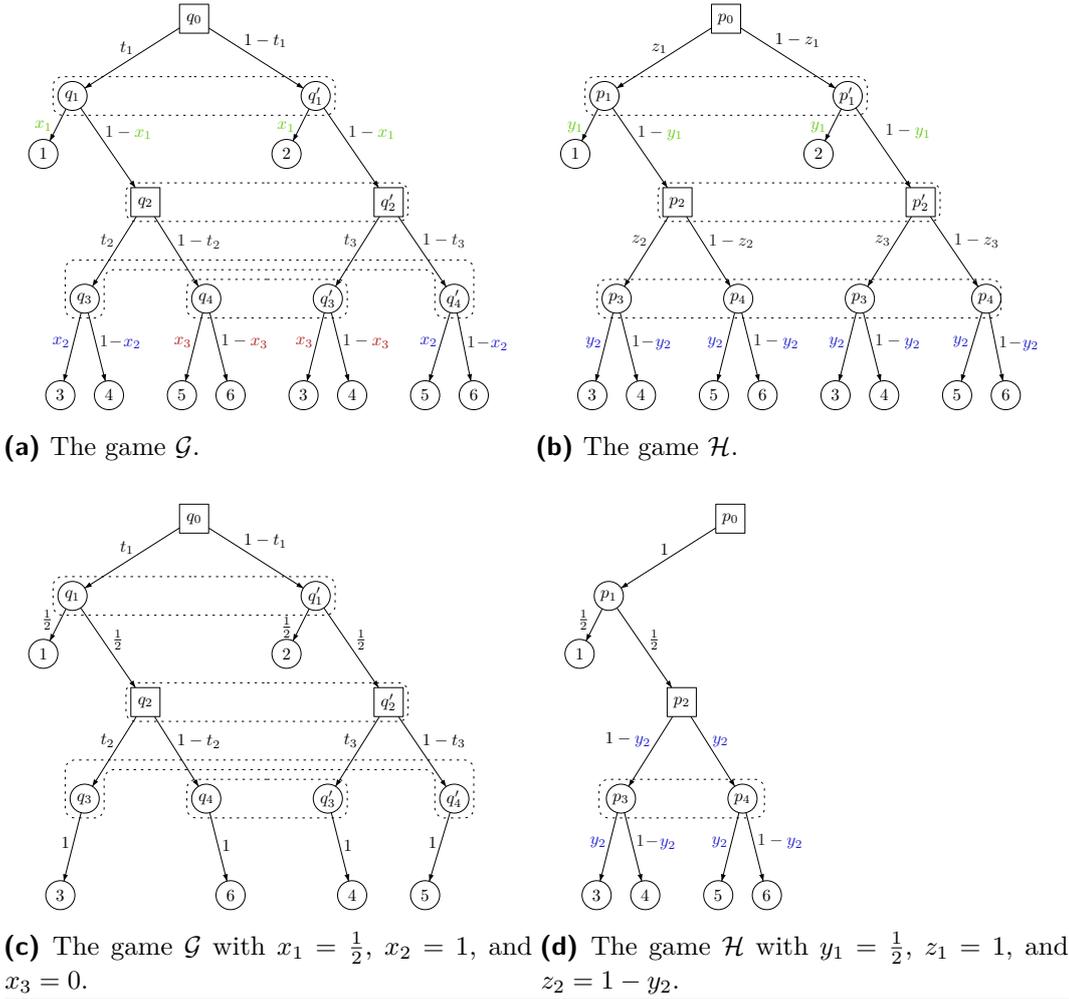
%
\begin{center}
\hrule
\medskip
\subfloat[The game $\G$.]{
   \input{figures/game-cex.tex}
   \label{fig:game-cex}
}
\hfill
\subfloat[The game $\H$.]{
   \input{figures/game-cex-mapped.tex}
    \label{fig:game-cex-mapped}
}
\hfill
\mbox{}\\
\bigskip
\subfloat[The game $\G$ with $x_1 = \frac{1}{2}$, $x_2 = 1$, and $x_3 = 0$.]{
   \input{figures/game-cex-instantiated.tex}
   \label{fig:game-cex-instantiated}
}
\hfill
\subfloat[The game $\H$ with $y_1 = \frac{1}{2}$, $z_1 = 1$, and $z_2 = 1-y_2$.]{
   \input{figures/game-cex-mapped-instantiated.tex}
    \label{fig:game-cex-mapped-instantiated}
}
\hrule
\caption{A game $\G$ and the abstract game $\H$ induced by a rectangular morphism,
with randomized strategies encoded by the variables $\bar{x},\bar{t}$ (in $\G$) and $\bar{y},\bar{z}$ (in $\H$).
 \label{fig:figgame-cex}}%
\end{center}
\end{figure}

The second example (\figurename~\ref{fig:game-cex}) shows a game $\G$ 
and its abstract game $\H$ induced by a rectangular morphism. 
To avoid tedious description of the game,
we present the key features informally. The game $\G$ has two actions 
for the player and two possible responses for the adversary.
The actions and moves are not shown in \figurename~\ref{fig:game-cex},
and states are depicted as circles when only the action of the player 
determines the successor, and as boxes when only the move of the environment
determines the successor. All states have color $0$ except the leaves, 
labeled by their color $1,2,\dots,6$. The leaves are sink states (with a self-loop
on every move). The dashed lines show the indistinguishability classes.
It is easy to check that the coloring function is information-consistent.

\figurename~\ref{fig:game-cex-mapped} shows the abstract game $\H$,
as a tree with the same shape as in \figurename~\ref{fig:game-cex} for $\G$.
The value $h(\tau)$ of a history can be read as the label of the node
in \figurename~\ref{fig:game-cex-mapped} corresponding to the path defined
by $\tau$ in \figurename~\ref{fig:game-cex}. For example, 
the histories (of length $3$) in $\G$ that correspond to the nodes $q_3$ and $q'_3$
are mapped by $h$ to the abstract state $p_3$. \figurename~\ref{fig:game-cex-mapped}
can be viewed as the unraveling of the morphism $h$. It is easy to verify
that $h$ is a rectangular refinement of the coloring function, 
and that $\approx$-equivalent abstract states are grouped within the
dashed lines.

We show that $\G$ does not reduce to $\H$ according to alternating probabilistic trace 
containment ($\G \not\preceq \H$). We describe all randomized strategies 
of the player and of the environment by specifying the probability to play 
an action or a move in each (relevant, i.e., non-leaf) history.
We use variables $\bar{x} = x_1,x_2,x_3$ for $\straa_{\G}$; $\bar{y} = y_1,y_2$ for $\straa_{\H}$;
$\bar{z} = z_1,z_2,z_3$ for $\strab_{\H}$; and $\bar{t} = t_1,t_2,t_3$ for $\strab_{\G}$.
Note that the player may only use information-consistent strategies,
hence the same variable is used from indistinguishable histories to
describe $\straa_{\G}$ and $\straa_{\H}$.
The claim $\G \not\preceq \H$ can be written as

$$ \psi \equiv \exists \bar{x} \cdot \forall \bar{y} \cdot \exists \bar{z} \cdot \forall \bar{t}: 
\bigvee_{c = 1}^{6} \varphi_c$$

where $\varphi_c$ is expresses that the mass of probability on color $c$ differs
in $\G$ and in $\H$. For example, 
$$\varphi_1 \equiv \,  t_1 x_1 \neq z_1 y_1, \text{and}$$
$$\varphi_3 \equiv \, t_1 t_2 x_2 (1-x_1) + t_3 x_3 (1-t_1) (1-x_1) \neq 
z_1 z_2 y_2 (1-y_1) + z_3 y_2 (1-z_1) (1-y_1).$$
 
To show that $\psi$ holds, we fix $x_1 = \frac{1}{2}$, $x_2 = 1$, and $x_3 = 0$.
Now, consider all possible values of $\bar{y}$ and observe that if $y_1 \neq \frac{1}{2}$,
then $\psi$ holds since the probability mass in colors $\{1,2\}$ is $\frac{1}{2}$
in $\G$ and $y_1$ in $\H$ (and thus $\varphi_1 \lor \varphi_2$ must hold regardless
of the value of $\bar{z}$ and $\bar{t}$). So, it remains to show that $\psi$ holds
for $y_1 = \frac{1}{2}$ (and for all values of $y_2$). Fix $z_1 = 1$ and $z_2 = 1-y_2$
(the value of $z_3$ is arbitrary), as illustrated in \figurename~\ref{fig:game-cex-instantiated} 
and \figurename~\ref{fig:game-cex-mapped-instantiated}. 

Consider all possible values of $\bar{t}$ and observe that if $t_1 \neq 1$,
then $\psi$ holds since the probability mass in color~$1$ is~$\frac{1}{2}$
in $\G$ and $t_1$ in $\H$ (thus $\varphi_1$ holds). With $t_1 = 1$,
we show that $\varphi_4 \lor \varphi_5$ holds, that is:
$$ 0 \neq \frac{(1-y_2)^2}{2} \ \lor \  0 \neq \frac{(y_2)^2}{2} $$
which holds since either $y_2 \neq 0$ or $1-y_2 \neq 0$.

We conclude that $\psi$ holds and thus $\G$ does not reduce to $\H$ ($\G \not\preceq \H$).
Note that slightly stronger statements can be proved by a similar argument:
first, the strategy $\strab_{\H}$ (represented by the variables $\bar{z}$) can be 
chosen pure, by setting $z_2 = 1$ if $y_2 = 0$ and $z_2 = 0$ if $y_2 = 1$ (and setting 
$z_2$ to an arbitrary value in $\{0,1\}$ otherwise);
second, the example does not even preserve almost-sure probabilities, in the following sense.
Given a target set $T \subseteq C = \{1,\dots,6\}$ of colors,
let $\Reach(T) = \bigcup_{k \in T} C^*kC^{\omega}$ be the reachability objective
with target $T$. Consider the formula

$$\psi' \equiv \exists \bar{x} \cdot \forall \bar{y} \cdot \exists \bar{z} \cdot \forall \bar{t}: 
\bigvee_{T \subseteq C} \lnot (\Prb_{\G}^{\bar{x},\bar{t}}(\Reach(T)) = 1 \Leftrightarrow
\Prb_{\H}^{\bar{y},\bar{z}}(\Reach(T)) = 1)$$
where $\Prb_{\G}^{\bar{x},\bar{t}}(\Reach(T))$ is the probability that the
objective $\Reach(T)$ is satisfied in $\G$ under strategies $\straa_{\G}$ defined by the variables 
$\bar{x}$ for the player and $\strab_{\G}$ defined by $\bar{t}$ for the environment
(and analogously for $\Prb_{\H}^{\bar{y},\bar{z}}(\Reach(T))$ in $\H$).
It is easy to check that the condition $\psi'$ entails~$\psi$.

To prove that $\psi'$ holds, we fix $x_1 = \frac{1}{2}$, $x_2 = 1$, and $x_3 = 0$
as before (\figurename~\ref{fig:game-cex-instantiated}). 
Consider three possible cases for $\bar{y}$: $(i)$ if $y_1 =0$,
then no matter the value of the other variables the objective 
$\Reach(\{1,2\})$ has probability $\frac{1}{2}$ in $\G$ and probability $0$ in $\H$,
so we take $T = C \setminus \{1,2\} = \{3,4,5,6\}$;
$(ii)$ if $y_1 = 1$, we take $T = \{1,2\}$ (the probability mass is $\frac{1}{2}$ in $\G$ and $1$ in $\H$);
$(iii)$ otherwise $0 < y_1 < 1$, and we take $z_1 = 1$ and $z_2 = 1-y_2$ (illustrated in \figurename~\ref{fig:game-cex-mapped-instantiated} for $y_1 = \frac{1}{2}$), and consider all possible values of $\bar{t}$:
if $t_1 \neq 1$ then $\Reach(\{2\})$ has positive probability in $\G$ and probability $0$ in $\H$,
so we take $T = C \setminus \{2\} = \{1,3,4,5,6\}$, and if $t_1 = 1$, then 
$\Reach(\{4,5\})$ has probability $0$ in $\G$ and positive probability in $\H$,
so we take $T = C \setminus \{4,5\} = \{1,2,3,6\}$.

\begin{theorem}\label{theo:cex}
There exists a regular game $\G$ such that the abstract game $\H$ induced from $\G$
is not alternating probabilistic trace equivalent to $\G$. 
\end{theorem}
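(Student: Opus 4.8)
The plan is to prove the statement in the form that is actually needed: since alternating probabilistic trace equivalence of $\G$ and $\H$ means $\G \preceq \H$ \emph{and} $\H \preceq \G$, it suffices to exhibit a single regular game $\G$, together with an explicit rectangular morphism $h$ (so that $\H$ is the abstract game induced by $h$), for which one of the two containments fails, say $\G \not\preceq \H$. I would take $\G$ to be the game of \figurename~\ref{fig:game-cex} and $h$ the morphism whose unraveling is \figurename~\ref{fig:game-cex-mapped}. The first step is the bookkeeping needed to make this legitimate: verify that the coloring of $\G$ is information-consistent, that $h$ satisfies the three defining properties of a rectangular morphism (rectangularity, morphism, refinement) so that $\G$ is genuinely regular and $\H$ is well defined, and that the dashed classes drawn in \figurename~\ref{fig:game-cex-mapped} are exactly the $\approx$-classes of $\H$.

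The second step is to reduce the quantified game-refinement statement to a sentence of real arithmetic. Because $\G$ and $\H$ are finite trees, every information-consistent randomized strategy of either player is described by finitely many probabilities in $[0,1]$, and information-consistency forces indistinguishable histories to carry the same variable; this yields the variable vectors $\bar{x},\bar{t}$ for $\G$ and $\bar{y},\bar{z}$ for $\H$, with the player controlling $\bar{x}$ and $\bar{y}$. For any fixed pair of strategies, the probability mass reaching each leaf color $c$ is a polynomial in these variables, and it is enough to separate on the measurable objectives $\Reach(\{c\})$ (equivalently, to compare leaf-color masses). Hence $\G \not\preceq \H$ is captured by the real sentence $\psi \equiv \exists \bar{x}\,\forall \bar{y}\,\exists \bar{z}\,\forall \bar{t}:\ \bigvee_{c=1}^{6}\varphi_c$, where each $\varphi_c$ is a polynomial disequality between the $\G$-mass and the $\H$-mass on color $c$, for instance $\varphi_1 \equiv t_1 x_1 \neq z_1 y_1$ and the analogous (larger) expression $\varphi_3$.

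The third step proves $\psi$ by an explicit witness-and-case argument. Fix $x_1 = \tfrac12$, $x_2 = 1$, $x_3 = 0$. If the opposing $\bar{y}$ has $y_1 \neq \tfrac12$, then the probability mass on the colors $\{1,2\}$ is $\tfrac12$ in $\G$ but $y_1$ in $\H$, so $\varphi_1 \lor \varphi_2$ holds no matter what $\bar{z}$ and $\bar{t}$ are. Otherwise $y_1 = \tfrac12$; then choose $z_1 = 1$ and $z_2 = 1-y_2$ (with $z_3$ arbitrary) and case-split on $\bar{t}$: if $t_1 \neq 1$ then the mass on color $1$ is $\tfrac12$ in $\G$ and $t_1$ in $\H$, so $\varphi_1$ holds; if $t_1 = 1$ then the masses on colors $4$ and $5$ are $0$ in $\G$ while in $\H$ they are $\tfrac{(1-y_2)^2}{2}$ and $\tfrac{y_2^2}{2}$ respectively, at least one of which is nonzero, so $\varphi_4 \lor \varphi_5$ holds. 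This establishes $\psi$, hence $\G \not\preceq \H$, hence $\G$ and $\H$ are not alternating probabilistic trace equivalent. I would also record the two strengthenings that fall out of the same computation: the witness $\strab_\H$ can be chosen pure (set $z_2 \in \{0,1\}$ according to whether $y_2 = 0$ or $y_2 = 1$), and the separation already occurs at the level of almost-sure probabilities, as expressed by the formula $\psi'$.

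The real difficulty is not any of these routine verifications but the \emph{design} of the pair $(\G,h)$. One must arrange that the morphism $h$ collapses subtrees so that $\H$ is forced to reuse a single randomization variable ($y_2$, appearing in both sibling subtrees below $p_2$) exactly where $\G$ has two independent ones ($x_2$ and $x_3$, which sit in "crossed" positions under the indistinguishability relation of $\G$), while still keeping $h$ rectangular and refining the coloring; and one must make the quantifier alternation land correctly, i.e. ensure that for the chosen $\bar{x}$ the environment in $\G$ (through $t_1$) can always undercut whatever strategy $\H$ offers the player. Once such a small example is in hand, the rest is elementary polynomial arithmetic; finding it — and checking that no choice of $\bar{y},\bar{z}$ can restore equality of all leaf masses — is the delicate part.
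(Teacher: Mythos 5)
Your proposal is correct and follows essentially the same route as the paper: the same counterexample game $\G$ and induced abstract game $\H$, the same encoding of strategies by variables $\bar{x},\bar{t},\bar{y},\bar{z}$, the same sentence $\psi$ with the witnesses $x_1=\frac{1}{2}$, $x_2=1$, $x_3=0$, $z_1=1$, $z_2=1-y_2$, and the same case analysis on $y_1$ and $t_1$ leading to $\varphi_4\lor\varphi_5$. The two strengthenings you record (pure $\strab_{\H}$ and the almost-sure separation via $\psi'$) are also exactly those noted in the paper.
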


\bibliographystyle{plainurl}

\newpage
\appendix

\section{Proof of Lemma~\ref{lem:Pforall}}

The abstract states computed by the fixpoint $Y_*$ are all universally winning.

\Pforall*

\begin{proof}  
Given $p \in Y_* \cap P$, we show that $p \in P_{\forall}$. 
Let $A_p = \{a \in A \mid (p,a) \in Y_*\}$ and show that $A_p \neq \emptyset$.
Since $Y_{*} \subseteq \Pre(Y_{*}) \cup P_F$,
either $(i)$ $p \in P_F$ and then $A_p = A \neq \emptyset$ as $P_F \times A \subseteq \Pre(P_F)$
(states in $P_F$ form a sink set and $P_F$ is closed under $\approx$); 
or $(ii)$ $p \in \Pre(Y_{*})$ and by definition 
there exists $a \in A$ such that $(p,a) \in Y_*$, hence $A_p \neq \emptyset$.

Consider the strategy $\straa$ for the player defined, for all $\tau \in \Gamma^*$,
by $\straa(\tau) = d_U(A_p)$
where $p = h(\tau)$ and $d_U(\cdot)$ is the uniform distribution. We show that
$\straa$ is information-consistent. Given $\tau' \sim \tau$, let $p = h(\tau)$
and $p' = h(\tau')$. Hence we have $p \approx p'$, and since $Y_*$ is closed
under $\approx$, if $(p,a) \in Y_*$ then  $(p',a) \in Y_*$. It follows that 
$A_p = A_{p'}$ and thus  $\straa(\tau) = \straa(\tau')$.

The strategy $\straa$ ensures that $Y_*$ is never left: if $h(\tau) \in Y_*$
and $a \in \Supp(\straa(\tau))$, then  $h(\tau c) \in Y_*$
for all $c$ such that $\act(c) = a$. Indeed, for $p = h(\tau)$
we have $a \in A_p$ and $(p,a) \in Y_* $ thus $(p,a) \in \Pre(Y_{*})$
which imply $\delta^P(p,c) = h(\tau c) \in Y_* $.

We now show that $\straa$ ensures from $Y_*$ a target state is reached
within the next $N = \abs{P}$ steps with probability at least $\nu = \frac{1}{\abs{A}^N}$,
that is $\Prb_{\tau}^{\straa,\strab}(\Reach^{\leq \abs{\tau} + N}) \geq \nu$ 
for all strategies $\strab$ of the environment and
for all $\tau \in \Gamma^*$ such that $h(\tau) \in Y_*$.

Since $Y_* = \mu X. \, \interior_{\approx}(Y_*) \cap (\Pre(X) \cup P_F) $,
the set $Y_*$ is the limit of the (non-decreasing) sequence $(X_i)_{i\in \nat}$ where
$X_0 = \emptyset$ and $X_{i+1} = Y_* \cap (\Pre(X_i) \cup P_F)$.
The \emph{rank} of an element $y \in Y_*$ is the least $i \geq 0$ such that $y \in X_i$ 
(i.e., such that $y \in X_i$ and $y \not\in X_{i-1}$). 
By extension, the rank of a history $\tau$ such that $h(\tau) \in Y_*$
is the rank of $h(\tau)$.
Note that states $p \in Y_* \cap P$ have odd rank, and the state-action
pairs  $(p,a) \in Y_* \cap (P \times A)$ have even rank. 
The smallest rank is $1$ and corresponds to the set $P_F$ of target states.  
Let $N_*$ be the largest rank, which is bounded by $\abs{Y_*}$. 

Intuitively, from any history $\tau$ such that $h(\tau) \in Y_*$ the strategy 
$\straa$ ensures, against all strategies of the environment, 
that the history $\tau c$ after the next round has a lower rank
(unless the rank of $\tau$ is $1$, and thus $\tau$ is a winning history)
with probability at least $\frac{1}{\abs{A}}$. 
Let $r > 1$ be the rank of $h(\tau)$, then  $h(\tau) \in \Pre(X_{r-1})$ and thus 
there is an action $a \in A$ such that $(h(\tau), a) \in X_{r-1}$, thus 
$(h(\tau), a) \in \Pre(X_{r-2})$ and $h(\tau c) \in X_{r-2}$
for all $c$ with $\act(c) = a$. It follows that $a \in A_p$ is played
with probability at least $\frac{1}{\abs{A}}$ and the intuitive claim follows.

Combined with the fact that the strategy $\straa$ ensures $Y_*$ is never left, 
this claim shows that from all $\tau$ such that $h(\tau) \in Y_*$, against all 
strategies $\strab$ of the environment, the reachability objective is satisfied 
(i.e., the rank decreases to $1$) within the next $N_*$ rounds with probability 
at least $\nu = \frac{1}{\abs{A}^{N_*}}$.
Formally $\Prb_{\tau}^{\straa,\strab}(\Reach^{\leq \abs{\tau} + N_*}) \geq \nu$
against all strategies. 
It follows that the probability of \emph{not} reaching a target state
within $k \cdot N_*$ rounds is at most $(1-\nu)^k$ which tends to $0$
as $k \to \infty$, thus $\Prb_{\tau}^{\straa,\strab}(\Reach) = 1$ 
by the squeeze theorem. We conclude that the player is almost-sure winning 
from $[\tau]_{\sim}$ since $Y_{*}$ is closed under $\approx$, which 
implies that $p = h(\tau) \in P_{\forall}$.
\end{proof}

\newpage

\section{Proof of Lemma~\ref{lem:Pexists}}

The fixpoint $Y_*$ contains all existentially winning states.

\Pexists*

\begin{proof}
As a preliminary, it will be useful to note that if a strategy $\straa_{{\sf as}}$ 
for the player is almost-sure winning from an information set $[\tau]_{\sim}$,
then for all actions $a \in \Supp(\straa_{{\sf as}}(\tau))$,
for all moves $c$ supported by $a$, $\act(c) = a$,
the strategy $\straa_{{\sf as}}$ is almost-sure winning from $[\tau c]_{\sim}$. 
An equivalent conclusion is that 
for all actions $a \in \Supp(\straa_{{\sf as}}(\tau))$,
the pair $(h(\tau),a)$ is in $\interior_{\approx}(\Pre(P_{\exists}))~(\star)$.

We now proceed with the proof of the lemma.
First, $P_{\exists} \subseteq P$ by definition. To show that 
$P_{\exists} \subseteq Y_*$, since $Y_*$ is defined as a greatest fixpoint,
it is sufficient to show that $Y_{\exists} := P_{\exists} \cup \interior_{\approx}(\Pre(P_{\exists}))$ 
is a fixpoint, that is
$Y_{\exists} = \mu X. \, \interior_{\approx}(Y_{\exists}) \cap (\Pre(X) \cup P_F)$.
Since $P_{\exists}$ is closed under $\approx$ (i.e., $P_{\exists} = \interior_{\approx}(P_{\exists})$) so is $Y_{\exists}$,
and we only need to show that $Y_{\exists} \subseteq \mu X. \, Y_{\exists} \cap(\Pre(X) \cup P_F)  =: X_*$
(the converse inclusion is trivial).

Given $y \in Y_{\exists}$, we show that $y \in X_*$, which
concludes the proof. We consider two cases,
either $y = p \in P_{\exists}$ or $y = (p,a) \in \interior_{\approx}(\Pre(P_{\exists}))$:

\begin{itemize}

\item[$(i)$] if $p \in P_{\exists}$, then there exists $\tau_p \in \Gamma^*$ such that $h(\tau_p) = p$ and 
the player is almost-sure winning from $[\tau_p]_{\sim}$ using a strategy $\straa_{{\sf as}}$. 
Assume towards contradiction that $p \not\in X_*$. 

  
We construct a (pure) spoiling strategy $\strab$ for the environment as follows.
For all $\tau \in \Gamma^*$ such that $h(\tau) \in P_{\exists}$ and $h(\tau) \not\in X_*$ (thus $h(\tau) \not\in \Pre(X_{*})$),
for all actions $a \in \Supp(\straa_{{\sf as}}(\tau))$ played by~$\straa_{{\sf as}}$ in $\tau$, 
the pair $(h(\tau),a)$ is not in $X_*$, 
and further not in $\Pre(X_*)$ since $(h(\tau),a) \in \interior_{\approx}(\Pre(P_{\exists}))$ 
by~$(\star)$, thus there exists a move $c$ supported
by action $a$ such that $h(\tau c) = \delta^P(h(\tau),c) \not\in X_*$.
Define $\strab(\tau, a) = c$, and define $\strab(\tau', a)$ arbitrarily
for $\tau'$ such that $h(\tau') \in X_*$.

We have the following, for all histories $\tau \in \Gamma^*$ and 
for $a \in \Supp(\straa_{{\sf as}}(\tau))$ and $c = \strab(\tau, a)$:

\begin{itemize}
\item if $h(\tau) \in P_{\exists}$, then $h(\tau c) \in P_{\exists}$ (since $\straa_{{\sf as}}$ is almost-sure winning),
\item if $h(\tau) \in P_{\exists}$ and moreover $h(\tau) \not\in X_*$, then $h(\tau c) \not\in X_*$,
\item $h(\tau_p) \in P_{\exists}$ and $h(\tau_p) \not\in X_*$. 
\end{itemize}

It follows by an inductive argument that against $\straa$, the constructed 
strategy $\strab$ ensures from $\tau_p$ that $X_*$ and thus $P_F$ is never reached,
hence $\Prb_{\tau_p}^{\straa_{{\sf as}},\strab}(\Reach) = 0$, which contradicts that  
$\straa_{{\sf as}}$ is almost-sure winning from $\tau_p$.
Hence $p \in X_*$.

\item[$(ii)$] if $(p,a) \in \interior_{\approx}(\Pre(P_{\exists}))$, then
$\delta^P(p,c)) \in P_{\exists}$ for all moves $c$ supported by~$a$.
By an argument similar to case $(i)$, we have $\delta^P(p,c)) \in X_*$.
It follows that $(p,a) \in \Pre(X_*)$ and thus $(p,a) \in X_* =  Y_{\exists} \cap (\Pre(X_*) \cup P_F)$. 
\end{itemize}

\end{proof}

\end{document}